\documentclass{article}

\usepackage{arxiv}

\usepackage[utf8]{inputenc}
\usepackage[T1]{fontenc}
\usepackage{lmodern}
\usepackage{microtype}

\usepackage{
amsmath,
amssymb,
amsfonts,
amsthm,
mathtools,
bm,
mathrsfs,
bbm
}
\numberwithin{equation}{section}

\usepackage{booktabs}
\usepackage{array}
\usepackage{enumitem}
\usepackage[numbers,sort&compress]{natbib}
\usepackage[hidelinks]{hyperref}
\usepackage[nameinlink,noabbrev]{cleveref}
\usepackage{url}
\usepackage{doi}

\hypersetup{
pdftitle={
Contextual Fraction Beyond Satisfiability:
Pure NAE Completeness, Exact Orbit Packings,
and Width Barriers
},
pdfauthor={Ronald Katende},
pdfsubject={
Pure NAE noncontextuality, fractional colouring,
probability-preserving equality compilation,
gated composition, treewidth,
and extension complexity
},
pdfkeywords={
contextual fraction,
noncontextual fraction,
pure NAE,
fractional chromatic number,
orbit-capacity packing,
homomorphism packing,
gated composition,
approximation hardness,
cut polytope,
treewidth,
extension complexity
}
}

\newcommand{\D}{\mathcal{D}}
\newcommand{\supp}{\operatorname{supp}}
\newcommand{\Hom}{\operatorname{Hom}}
\newcommand{\Aut}{\operatorname{Aut}}

\newcommand{\CF}{\operatorname{CF}}
\newcommand{\NCF}{\operatorname{NCF}}

\newcommand{\NAE}{\operatorname{NAE}}

\newcommand{\CUT}{\operatorname{CUT}}
\newcommand{\STAB}{\operatorname{STAB}}
\newcommand{\xc}{\operatorname{xc}}
\newcommand{\tw}{\operatorname{tw}}

\newcommand{\id}{\operatorname{id}}
\newcommand{\1}{\mathbbm{1}}

\newcommand{\R}{\mathbb{R}}

\newcommand{\B}{\mathbf{B}}
\newcommand{\X}{\mathbf{X}}

\newcommand{\eps}{\varepsilon}

\newcommand{\Mpoly}{\mathcal{M}}
\newcommand{\Fcut}{\mathcal{F}}

\newcommand{\PureNAENC}{%
\textnormal{\textsc{Pure-NAE-NC}}%
}

\newtheorem{theorem}{Theorem}[section]
\newtheorem{lemma}[theorem]{Lemma}
\newtheorem{proposition}[theorem]{Proposition}
\newtheorem{corollary}[theorem]{Corollary}

\theoremstyle{definition}
\newtheorem{definition}[theorem]{Definition}
\newtheorem{example}[theorem]{Example}

\theoremstyle{remark}
\newtheorem{remark}[theorem]{Remark}
\newtheorem{openproblem}[theorem]{Open Problem}

\title{
Contextual Fraction Beyond Satisfiability: Pure NAE Completeness, Exact Orbit Packings, and Width Barriers
}

\author{
Ronald Katende \\
Department of Mathematics \\
Kabale University \\
Kikungiri Hill, Katuna Road \\
Kabale, Uganda \\
\texttt{rkatende@kab.ac.ug}
}

\date{}

\renewcommand{\headeright}{}
\renewcommand{\undertitle}{}
\renewcommand{\shorttitle}{
Contextual Fraction Beyond Satisfiability
}
\begin{document}

\raggedbottom
\maketitle

\maketitle

\begin{abstract}
The contextual fraction measures how much globally consistent probability can be packed beneath prescribed local event probabilities. We study the quantitative regime in which supported global assignments exist but the contextual fraction is nontrivial.

For balanced relational lifts, we identify the noncontextual fraction with a capacity-constrained packing of global homomorphisms and prove an exact orbit-capacity quotient under finite group symmetries. For the uniform Boolean relation $\NAE_3$, this becomes a minority-position congestion game on proper hypergraph two-colourings:
\[
\NCF(e_H)=\frac{1}{3\beta(H)}.
\]
We also obtain a minimax dual, a cut-polytope formulation, and explicit values $2/3$ and $1/2$.

For exact hardness, we map a graph $G$ to an always-two-colourable anchor hypergraph $A(G)$ with
\[
\NCF(e_{A(G)})=1 \Longleftrightarrow \chi_f(G)\le3.
\]
We prove quantitative stability in terms of $\chi_f(G)$ and construct a six-edge pure-$\NAE_3$ equality gadget that preserves the full noncontextual fraction under bounded-occurrence compilation. Thus deciding $\CF=0$ is NP-complete for simple $3$-uniform hypergraphs of maximum degree seven, even with a supplied proper two-colouring.

We further prove
\[
\NCF(\widehat e^{\,\vartheta})
=\vartheta+(1-\vartheta)\NCF(e).
\]
This yields NP-hardness of distinguishing $\NCF=1/2$ from $\NCF\ge2/3$ on an always-satisfiable fixed five-ary switch template, and additive hardness below $1/12$. A gated colouring construction gives hardness within $1/2-\varepsilon$ for every $\varepsilon>0$. Finally, bounded-treewidth instances admit exact compact extended formulations, while explicit families have extension complexity $2^{\Omega(\sqrt N)}$ via cut-polytope projections.
\end{abstract}

\keywords{
contextual fraction
\and noncontextual fraction
\and pure NAE
\and fractional chromatic number
\and orbit-capacity packing
\and homomorphism packing
\and gated composition
\and approximation hardness
\and cut polytope
\and treewidth
\and extension complexity
}

\section{Introduction}
\label{sec:introduction}

The contextual fraction was introduced as a quantitative measure of
contextuality and is computed, for finite empirical models, by a primal--dual
linear program \cite{AbramskyBarbosaMansfield2017}.  In the sheaf-theoretic
setting, supported global assignments are global sections
\cite{AbramskyBrandenburger2011}.  Their existence is closely connected with
constraint satisfaction \cite{AbramskyGottlobKolaitis2013}.  Consequently,
endpoint statements of the form
\[
\CF(e)<1
\quad\Longleftrightarrow\quad
\text{a finite constraint instance is satisfiable}
\]
inherit ordinary CSP or PCSP complexity.  The companion gain-graph paper
identifies an exact holonomy tractability island inside that support-hard
world \cite{Katende2026Gain}.

The present paper studies a different question.  Assume that supported global
assignments already exist.  How much probability mass can be placed on them
without exceeding any prescribed local event capacity?  This is not a
maximum-satisfaction objective: every global assignment used by the packing
satisfies every constraint.  The optimization concerns the simultaneous
congestion of local events across a distribution of satisfying assignments.

\paragraph{Relation to the companion gain-graph paper.}
The companion paper studies probability-weighted permutation gain graphs,
where holonomy gives an exact fixed-point formula, a query-optimal static
algorithm, and fixed-tree dynamic maintenance
\cite{Katende2026Gain}.  The present paper concerns non-permutation relational
templates and the complementary quantitative-complexity regime: orbit-capacity
ackings, pure-$\NAE_3$ completeness, approximation hardness,
bounded-treewidth formulations, and extension-complexity barriers.  The
principal theorems and proofs of the two papers are disjoint.

\subsection{Novel theorem-level contributions}

The paper is organized around six results that do not follow from the
support correspondence.

\begin{enumerate}[label=\textbf{C\arabic*.},leftmargin=2.7em]
\item \textbf{Exact orbit-capacity quotient.}
Point-transitive symmetry with several relation orbits does not merely give a
partial lower bound.  It yields an exact quotient of the contextual-fraction
LP, with variables indexed by postcomposition orbits of global homomorphisms
and capacities indexed by local relation orbits.

\item \textbf{Exact intermediate theory for Boolean $\NAE_3$.}
The quotient is identified with a minority-position congestion game.  We
prove the formula $\NCF=1/(3\beta)$, its minimax dual, a cut-face radial
representation, sharp endpoint criteria, and explicit intermediate values.

\item \textbf{Gate-free pure-$\NAE_3$ completeness and exact compilation.}
For the anchor hypergraph $A(G)$, the minority game is an explicit optimization
over the stable-set polytope, its noncontextual endpoint is equivalent to
$\chi_f(G)\le3$, and quantitative stability bounds relate its value to
$\chi_f(G)$.  A pure-$\NAE_3$ equality gadget with a unique exact local-law
kernel preserves the entire noncontextual fraction under bounded-occurrence
compilation.  This yields NP-completeness of $\CF=0$ on simple,
always-two-colourable, maximum-degree-seven hypergraphs, even when a proper
colouring is supplied.

\item \textbf{Exact gated composition.}
A free baseline branch and an active constrained branch combine affinely:
$\NCF(\widehat e^{\,\vartheta})=
\vartheta+(1-\vartheta)\NCF(e)$.  The theorem is an equality for every source
instance, not merely a completeness construction.

\item \textbf{Constant-gap hardness inside the satisfiable region.}
The $\NAE_3$ gate gives a fixed $1/6$ value gap at $\vartheta=1/2$.  A gated
three-colouring template gives $\vartheta$ versus $1$, so additive hardness
approaches $1/2$ arbitrarily closely while every relational instance and every
lifted empirical model has supported global assignments.  The gate is now
needed only for a constant additive gap, not for exact hardness.

\item \textbf{A width-sensitive polyhedral boundary.}
The homomorphism packing has an exact compact extended formulation on bounded
primal treewidth.  Conversely, an explicit family of $\NAE_3$ marginal faces,
and a face of the same gated $\NAE_3$ template used for hardness, inherit
exponential cut-polytope extension complexity.
\end{enumerate}

\subsection{What is deliberately not reproved}

The contextual-fraction LP and its resource-theoretic properties are due to
Abramsky, Barbosa, and Mansfield
\cite{AbramskyBarbosaMansfield2017}.  The contextuality--CSP bridge predates
this work \cite{AbramskyGottlobKolaitis2013}.  The algebraic theory of PCSPs,
minions, and promise reductions is developed elsewhere
\cite{BartoBulinKrokhinOprsal2021,KrokhinOprsal2022}; finite-valued CSPs and
valued-minion approaches to approximation are likewise established theories
\cite{ThapperZivny2016,BartoEtAlValuedMinions2024}.  We use only the minimum
background required to state the new packing results.

Group averaging is also classical.  PCP folding imposes algebraic invariance
on proof tables \cite{Hastad2001,AustrinBrownCohenHastad2021}, while Holant
methods exploit transformations and symmetries of local signatures
\cite{CaiLuXia2011,CaiLuXia2018}.  The novelty claimed here is not averaging
itself, but the exact residual orbit-capacity LP and the affine gated
composition law for contextual fraction.

Robust CSP and PCSP algorithms optimize the fraction of constraints satisfied
by one assignment, often through SDP rounding
\cite{BrakensiekGuruswamiSandeep2025}.  MAX NAE-SAT has its own approximation
theory \cite{BrakensiekHuangPotechinZwick2025}.  Our objective is different:
it packs a subprobability distribution supported entirely on satisfying
assignments and caps each local event separately.  Section
\ref{sec:gated-hardness} gives a worked separation on the switch template
itself.

Free-acceptance branches, completeness padding, and gap-amplification gadgets
are standard in PCP and robust-satisfaction reductions
\cite{Hastad2001,AustrinBrownCohenHastad2021,
BrakensiekGuruswamiSandeep2025,BrakensiekCiardoEtAl2026}.  We do not claim
novelty for adding an always accepting branch.  In those frameworks the
tracked quantity is normally an acceptance probability or the maximum fraction
of constraints satisfied by one assignment, and recent robust-PCSP work also
studies preservation of near-satisfaction under equality constraints.  Here one
pair of shared gate variables selects a single global branch, the two branches
occupy disjoint local-event capacities, and the tracked quantity is the maximum
mass of a distribution supported entirely on globally satisfying assignments.
The result is therefore an exact affine identity for every source packing, not
a completeness/soundness inequality.  Standard acceptance-value bookkeeping
does not by itself imply that identity.

\subsection{Relation to \texorpdfstring{$\mathsf P$ versus $\mathsf{NP}$}{P versus NP}}

The results are completeness and inapproximability theorems, not an
unconditional separation.  The pure-$\NAE_3$ theorem below gives one fixed,
bounded-arity, bounded-occurrence, always-satisfiable language on which deciding
$\CF=0$ is equivalent to an NP-complete problem.  Hence a polynomial-time
algorithm would imply $\mathsf P=\mathsf{NP}$, but the reduction supplies no
unconditional lower bound and therefore no evidence resolving the separation.
The near-$1/2$ theorem is optimal only as an additive-hardness statement up to
an arbitrarily small constant, because the constant output $1/2$ approximates
every $[0,1]$-valued quantity within error $1/2$.

\section{Balanced lifts and the exact homomorphism packing}
\label{sec:framework}

We work with finite, explicitly represented, rational probability tables.
Many-sorted notation is used because the sharp gated-colouring construction
has a two-element gate sort and a $q$-element colour sort.

\begin{definition}[Balanced realization]
Let $\B$ be a finite relational structure with sorts $s\in\mathcal S$ and
finite domains $B_s$.  A relation symbol $R$ has sort signature
$(s_1,\ldots,s_r)$ and interpretation
\[
R^{\B}\subseteq B_{s_1}\times\cdots\times B_{s_r}.
\]
A \emph{balanced realization} consists of full-support distributions
$\pi_s\in\D(B_s)$ and, for each $R$, a distribution
$\mu_R\in\D(R^{\B})$ whose $j$th marginal is $\pi_{s_j}$.
We assume $\supp(\mu_R)=R^{\B}$ unless stated otherwise.
\end{definition}

\begin{remark}[Sort discipline]
All automorphisms and homomorphisms preserve sorts.  The orbit quotient below
uses a group acting transitively on each sort separately.  The Boolean
$\NAE_3$ sections then specialize to one sort, whereas the gated-colouring
section adjoins a two-element gate sort to the unchanged colour sort.  No proof
identifies values belonging to different sorts.
\end{remark}

Let $\X$ be a finite instance.  Every variable has a sort, and every
constraint occurrence is written
$c=(R,(x_1,\ldots,x_r))$.  Isolated variables are deleted, since they do not
affect any local table.

The \emph{balanced occurrence lift} $e_{\X}$ is formed as follows.  Retain one
central observable for every variable.  For every occurrence of $x_j$ in a
constraint $c$, introduce a fresh observable $x_j^c$.  Put the table $\mu_R$
on the occurrence context $(x_1^c,\ldots,x_r^c)$ and the consistency table
\[
\mu_{=,s}(a,b):=\pi_s(a)\1[a=b]
\]
on $(x_j,x_j^c)$.  All overlaps are singleton overlaps and every table has the
prescribed singleton marginal, so the lift is compatible.

For a finite empirical model $e$, the known contextual-fraction primal is
\begin{equation}
\label{eq:cf-primal-general}
\NCF(e)=
\max\left\{
\sum_g b_g:
\sum_{g:g|_C=u}b_g\le e_C(u),\quad b_g\ge0
\right\},
\end{equation}
where $g$ ranges over global assignments and $(C,u)$ over local events
\cite{AbramskyBarbosaMansfield2017}.

\begin{proposition}[Exact homomorphism packing]
\label{prop:hom-packing}
For the balanced occurrence lift of $\X$,
\begin{align}
\NCF(e_{\X})=\max\quad
&\sum_{h\in\Hom(\X,\B)}w_h
\label{eq:hom-pack-objective}\\
\text{subject to}\quad
&\sum_{h:\,h(c)=\mathbf b}w_h\le\mu_R(\mathbf b),
&&c=(R,\mathbf x),\quad \mathbf b\in R^{\B},
\label{eq:hom-pack-capacity}\\
&w_h\ge0.
\nonumber
\end{align}
The consistency-context and singleton-capacity inequalities are redundant.
The dual is
\begin{align}
\min\quad
&\sum_{c=(R,\mathbf x)}\sum_{\mathbf b\in R^{\B}}
\mu_R(\mathbf b)y_{c,\mathbf b}
\label{eq:hom-dual-objective}\\
\text{subject to}\quad
&\sum_c y_{c,h(c)}\ge1,
&&h\in\Hom(\X,\B),
\label{eq:hom-dual-cover}\\
&y_{c,\mathbf b}\ge0.
\nonumber
\end{align}
\end{proposition}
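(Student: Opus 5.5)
Start from the general primal \eqref{eq:cf-primal-general} for $e_{\X}$ and reduce it to \eqref{eq:hom-pack-objective}--\eqref{eq:hom-pack-capacity}. The steps are: (i) restrict the support of any feasible $b$ to homomorphism-like assignments; (ii) collapse those assignments onto $\Hom(\X,\B)$; (iii) show that the consistency and singleton capacities are implied; (iv) take the LP dual.

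\textbf{Support restriction.} A global assignment $g$ of $e_{\X}$ assigns values to every central observable $x$ and every occurrence copy $x_j^c$, respecting sorts. Suppose $b_g>0$. The consistency context $(x_j,x_j^c)$ has table $\mu_{=,s}$, which vanishes off the diagonal, so $g(x_j)=g(x_j^c)$ for every occurrence. The occurrence context of $c=(R,\mathbf x)$ has table $\mu_R$ supported on $R^{\B}$, so $(g(x_1^c),\ldots,g(x_r^c))\in R^{\B}$. Together these show that $g$ is determined by its restriction $h$ to central variables, that $h(c)\in R^{\B}$ for every $c$, and hence that $h\in\Hom(\X,\B)$. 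Conversely, every $h\in\Hom(\X,\B)$ extends uniquely to such an assignment $\hat h$ by copying values to the occurrence observables.

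\textbf{Collapse onto $\Hom(\X,\B)$.} Set $w_h:=b_{\hat h}$. This is a bijection between feasible vectors supported on consistent assignments and nonnegative vectors on $\Hom(\X,\B)$, and it preserves the objective. The occurrence-context capacity at $(c,\mathbf b)$ becomes exactly \eqref{eq:hom-pack-capacity}. Occurrence events outside $R^{\B}$ have capacity $0$ and receive no mass, so they impose nothing further.

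\textbf{Redundancy.} Both the consistency capacity at $(a,a)$ and any singleton capacity at $a$ reduce to
\[
\sum_{h:h(x)=a}w_h\le\pi_s(a).
\]
Choose any occurrence $c$ of $x$; one exists because isolated variables were deleted, and suppose $x$ sits at position $j$ of $c$. Summing \eqref{eq:hom-pack-capacity} over all $\mathbf b$ with $b_j=a$, and using that $\mu_R$ has $j$th marginal $\pi_s$, gives
\[
\sum_{h:h(x)=a}w_h\le\sum_{\mathbf b:b_j=a}\mu_R(\mathbf b)=\pi_s(a).
\]
Off-diagonal consistency events have capacity $0$ and no mass. One point needs care: this argument must not assume $x$ occurs only once in $c$. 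If the same variable fills positions $j\neq k$ of $c$, the positions are tied by $h$, and the summation over $b_j=a$ still bounds the total correctly.

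\textbf{Duality.} The reduced program is a finite feasible LP ($w=0$ is feasible) with a bounded objective, so strong duality holds. Assign one multiplier $y_{c,\mathbf b}\ge0$ to each capacity constraint. The column of $w_h$ has a $1$ exactly in the rows $(c,h(c))$, which yields the covering constraint \eqref{eq:hom-dual-cover} and the objective \eqref{eq:hom-dual-objective}.

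The main obstacle is the first step. To make it rigorous I would verify in turn that zero-capacity events force each positive-mass global assignment to be both consistent across copies and relation-respecting.
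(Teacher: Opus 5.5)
Your proposal is correct and follows essentially the same route as the paper's proof: zero-capacity consistency and off-relation rows eliminate every global column that is not the copy-extension of a homomorphism, and the surviving relation rows are exactly \eqref{eq:hom-pack-capacity}. The consistency and singleton capacities then follow by summing one incident relation context over all $\mathbf b$ with $b_j=a$ and using the balanced $j$th marginal $\pi_s$, and the dual is the ordinary finite LP dual.
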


\begin{proof}
A global column violating a consistency context or selecting a tuple outside
$R^{\B}$ meets a zero-capacity row and therefore has zero weight in
\eqref{eq:cf-primal-general}.  The remaining columns assign one value to every
central variable and preserve every relation; they are exactly the
homomorphisms $h:\X\to\B$.  Their relation-row constraints are precisely
\eqref{eq:hom-pack-capacity}.

Fix a variable $x$ of sort $s$, an incident constraint $c$, and a position $j$
where $x$ occurs.  For any $a\in B_s$,
\[
\sum_{h:h(x)=a}w_h
=\sum_{\mathbf b:\,b_j=a}
\sum_{h:h(c)=\mathbf b}w_h
\le\sum_{\mathbf b:\,b_j=a}\mu_R(\mathbf b)
=\pi_s(a).
\]
Hence all consistency and singleton capacities follow from one incident
relation context.  This proves the primal formula.  The displayed dual is the
ordinary finite LP dual.
\end{proof}

\begin{definition}[Homomorphism marginal polytope]
Let $M(h)$ be the $0$--$1$ vector recording every relation event selected by
$h$.  Define
\[
\Mpoly(\X,\B):=
\operatorname{conv}\{M(h):h\in\Hom(\X,\B)\}.
\]
Then $\NCF(e_{\X})$ is the largest $W\ge0$ for which
$Wm\le v_e$ coordinatewise for some $m\in\Mpoly(\X,\B)$.
\end{definition}

This radial-gauge viewpoint is used below.  Unlike a valued-CSP objective, it
is not a sum of local costs evaluated on one assignment.

\section{Exact orbit-capacity quotient}
\label{sec:orbit}

Assume that a finite group $G\le\Aut(\B)$ acts transitively on every sort
$B_s$.  For a relation $R$, let
\[
\Omega_R:=R^{\B}/G
\]
be the diagonal tuple-orbit set.  Choose positive orbit masses
$\theta_R\in\D(\Omega_R)$ and define
\begin{equation}
\label{eq:orbit-tables}
\pi_s(a):=\frac1{|B_s|},
\qquad
\mu_R(\mathbf b):=
\frac{\theta_R(\mathcal O)}{|\mathcal O|}
\quad(\mathbf b\in\mathcal O).
\end{equation}
Uniform measure on each tuple orbit has uniform marginal in every coordinate,
so \eqref{eq:orbit-tables} is balanced.

The group acts on $\Hom(\X,\B)$ by postcomposition.  Write
\[
\mathscr H_G(\X):=\Hom(\X,\B)/G.
\]
For $Q\in\mathscr H_G(\X)$ and a constraint $c=(R,\mathbf x)$, define
\[
\omega_Q(c):=G\,h(c)\in\Omega_R,
\]
where $h\in Q$.  This does not depend on the representative.

\begin{theorem}[Exact orbit-capacity quotient]
\label{thm:orbit-quotient}
For the orbit-balanced lift \eqref{eq:orbit-tables},
\begin{align}
\NCF(e_{\X})=\max\quad
&\sum_{Q\in\mathscr H_G(\X)}u_Q
\label{eq:orbit-objective}\\
\text{subject to}\quad
&\sum_{Q:\,\omega_Q(c)=\mathcal O}u_Q
\le\theta_R(\mathcal O),
&&c=(R,\mathbf x),\quad \mathcal O\in\Omega_R,
\label{eq:orbit-capacity}\\
&u_Q\ge0.
\nonumber
\end{align}
Thus the quotient is exact even when a relation has several tuple orbits.
\end{theorem}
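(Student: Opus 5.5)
We start from Proposition~\ref{prop:hom-packing}, which already reduces $\NCF(e_{\X})$ to the homomorphism packing LP \eqref{eq:hom-pack-objective}--\eqref{eq:hom-pack-capacity} with capacities $\mu_R$ given by \eqref{eq:orbit-tables}. The argument then shows two inequalities between this LP and the quotient LP \eqref{eq:orbit-objective}--\eqref{eq:orbit-capacity}. The easy direction is aggregation: for a feasible $w$, set $u_Q:=\sum_{h\in Q}w_h$. For a constraint $c=(R,\mathbf x)$ and an orbit $\mathcal O\in\Omega_R$, the set of $h$ with $h(c)\in\mathcal O$ is exactly the union of the classes $Q$ with $\omega_Q(c)=\mathcal O$, because $G$ acts on tuples diagonally and $h(c)\in\mathcal O$ is invariant under postcomposition. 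Summing \eqref{eq:hom-pack-capacity} over $\mathbf b\in\mathcal O$ gives $\sum_{Q:\omega_Q(c)=\mathcal O}u_Q\le\sum_{\mathbf b\in\mathcal O}\mu_R(\mathbf b)=\theta_R(\mathcal O)$. The objective is preserved, so the quotient value is at least $\NCF(e_{\X})$.

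For the reverse inequality we lift by uniform spreading over orbits. Given a feasible $u$, define
\[
w_h:=\frac{u_Q}{|Q|}\qquad (h\in Q).
\]
The objective is again unchanged. To check \eqref{eq:hom-pack-capacity}, fix $c$ and $\mathbf b\in\mathcal O$. For each $Q$ with $\omega_Q(c)=\mathcal O$, we need the number of $h\in Q$ with $h(c)=\mathbf b$. We choose a representative $h_0\in Q$ and use orbit--stabilizer: $Q\cong G/\operatorname{Stab}(h_0)$, and the evaluation map $Q\to\mathcal O$, $h\mapsto h(c)$, is $G$-equivariant and surjective. Because $\mathcal O$ is a single transitive $G$-set, every fibre of an equivariant surjection between transitive $G$-sets has the same size, namely $|Q|/|\mathcal O|$. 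Hence
\[
\sum_{h\in Q:\,h(c)=\mathbf b}w_h=\frac{u_Q}{|Q|}\cdot\frac{|Q|}{|\mathcal O|}=\frac{u_Q}{|\mathcal O|}.
\]
Summing over the relevant $Q$ gives at most $\theta_R(\mathcal O)/|\mathcal O|=\mu_R(\mathbf b)$, which is exactly the required capacity. The two values therefore coincide.

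The only delicate point is this equal-fibre count. It is what makes the quotient exact rather than a bound, even when $R$ has several orbits, because each capacity row involves a single orbit and each $Q$ maps onto exactly one orbit per constraint. The count itself is standard: for $\mathbf b,\mathbf b'\in\mathcal O$, pick $g$ with $g\mathbf b=\mathbf b'$; then postcomposition by $g$ is a bijection of $Q$ carrying the fibre over $\mathbf b$ onto the fibre over $\mathbf b'$. The transitivity of $G$ on the sorts $B_s$ is used only to make the tables \eqref{eq:orbit-tables} balanced, with uniform $\pi_s$, so that Proposition~\ref{prop:hom-packing} applies. The exactness argument itself needs only that $G$ preserves sorts and relations.
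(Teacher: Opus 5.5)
Your proof is correct and follows the paper's argument. The lifting direction, where you spread $u_Q$ uniformly over $Q$ and use the equal-fibre count for the equivariant surjection $Q\to\mathcal O$, $h\mapsto h(c)$, is exactly the paper's. The one difference is in the aggregation direction: the paper first averages $w$ under postcomposition and reuses the equal-fibre count, while you simply sum the tuple capacities over each orbit. Your version is slightly more economical, and it makes clear that uniformity of $\mu_R$ within orbits is needed only for the lifting direction.
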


\begin{proof}
Start from \cref{prop:hom-packing}.  For any feasible vector $(w_h)$, average
under postcomposition:
\[
\bar w_h:=\frac1{|G|}\sum_{g\in G}w_{g^{-1}\circ h}.
\]
Relation capacities are $G$-invariant, so the averaged vector is feasible and
has the same objective.  It is constant on every orbit
$Q\in\mathscr H_G(\X)$.  Put $u_Q:=\sum_{h\in Q}\bar w_h$.

Fix $Q$, a constraint $c$, and
$\mathcal O=\omega_Q(c)$.  The evaluation map
\[
Q\longrightarrow\mathcal O,
\qquad h\longmapsto h(c),
\]
is $G$-equivariant and surjective.  All fibers therefore have the same size.
Uniform mass $u_Q$ on $Q$ contributes exactly $u_Q/|\mathcal O|$ to every
local tuple in $\mathcal O$.  Summing over $Q$ and comparing with the capacity
$\theta_R(\mathcal O)/|\mathcal O|$ gives
\eqref{eq:orbit-capacity}.

Conversely, given feasible $(u_Q)$, distribute $u_Q$ uniformly over the
homomorphisms in $Q$.  The same equal-fiber calculation gives local load
\[
\frac1{|\mathcal O|}
\sum_{Q:\,\omega_Q(c)=\mathcal O}u_Q
\le\frac{\theta_R(\mathcal O)}{|\mathcal O|}
\]
on each tuple of $\mathcal O$.  Hence the lifted homomorphism weights are
feasible and have objective $\sum_Qu_Q$.
\end{proof}

\begin{corollary}[One-orbit endpoint]
\label{cor:one-orbit}
If $G$ is transitive on every sort and every relation, then
$\NCF(e_{\X})=1$ whenever $\X\to\B$, and $\NCF(e_{\X})=0$ otherwise.
\end{corollary}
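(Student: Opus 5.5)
The plan is to specialize \cref{thm:orbit-quotient} to the case where every relation $R$ has a single tuple orbit, so that $\Omega_R=\{R^{\B}\}$ and $\theta_R$ is the point mass $1$ on that orbit. The orbit-balanced tables \eqref{eq:orbit-tables} then reduce to $\mu_R(\mathbf b)=1/|R^{\B}|$, uniform on $R^{\B}$. This is the full-support balanced realization the corollary refers to.

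In the quotient LP \eqref{eq:orbit-objective}--\eqref{eq:orbit-capacity}, every $Q\in\mathscr H_G(\X)$ has $\omega_Q(c)=R^{\B}$ for every constraint $c=(R,\mathbf x)$, because there is only one orbit. So every capacity row reads $\sum_{Q}u_Q\le 1$, and the LP collapses to
\[
\max\Bigl\{\sum_Q u_Q:\ \sum_Q u_Q\le1,\ u\ge0\Bigr\}.
\]
This requires at least one constraint to exist. That holds whenever $\X$ has any variable, since isolated variables were deleted; the empty instance can be handled trivially.

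If $\X\to\B$, then $\mathscr H_G(\X)$ is nonempty. Putting $u_Q=1$ on a single orbit is feasible, so the value is $1$. If no homomorphism exists, the LP has no variables, so its value is $0$. By \cref{thm:orbit-quotient} this value equals $\NCF(e_{\X})$.

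The only delicate point is the bookkeeping. I need to confirm that "transitive on every relation" means exactly one diagonal orbit on $R^{\B}$, and that the quotient theorem applies with $\theta_R$ being a point mass. It does, since positivity only needs to hold on the single orbit. Alternatively, the case of no homomorphism follows directly from \cref{prop:hom-packing}: the packing there has no columns.
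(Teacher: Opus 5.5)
Your proposal is correct and follows the paper's proof: with one tuple orbit per relation, every capacity row in \cref{thm:orbit-quotient} reads $\sum_Q u_Q\le 1$, so unit mass on any homomorphism orbit is optimal when $\X\to\B$, and the packing has no columns otherwise. You also note explicitly that the upper bound needs at least one constraint, which the deletion of isolated variables guarantees for nonempty $\X$; the paper leaves this bookkeeping point implicit.
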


\begin{proof}
Each $\Omega_R$ has one orbit of capacity one.  If a homomorphism orbit exists,
assign it unit mass in \cref{thm:orbit-quotient}; if none exists, the packing
has no columns.
\end{proof}

\begin{remark}[Why this is more than folding]
The averaging step is familiar from symmetry reduction, PCP folding, and
signature transformations.  The theorem identifies the exact object left
after averaging fails to collapse a relation to one orbit: a global packing
over homomorphism orbits with independent local orbit capacities.  Neither a
MAX-CSP objective nor a partition-function transformation yields
\eqref{eq:orbit-capacity} automatically.
\end{remark}

\section{Boolean \texorpdfstring{$\NAE_3$}{NAE3}: minority games and cut geometry}
\label{sec:nae}

Let
\[
B=\{0,1\},
\qquad
\NAE_3=B^3\setminus\{000,111\}.
\]
The complementation group $G=\{\id,\neg\}$ is point-transitive and has three
orbits on $\NAE_3$.  Orbit $\mathcal O_j$ contains the two tuples whose $j$th
coordinate is the unique minority bit.  The uniform $\NAE_3$ table assigns
orbit mass $1/3$ to every $\mathcal O_j$.

Throughout this section, $H=(V,E)$ is an ordered $3$-uniform hypergraph with
$E\ne\varnothing$.  The empty-edge case is noncontextual and has $\NCF=1$.
A proper two-colouring is nonconstant on every hyperedge.  We identify a
colouring with its global complement and write $\mathcal C(H)$ for the set of
complement classes.  If $e=(v_1,v_2,v_3)$ and $S\in\mathcal C(H)$, let
$m_e(S)\in\{1,2,3\}$ be the minority position.

\begin{theorem}[Exact minority-position packing]
\label{thm:nae-packing}
If $\mathcal C(H)=\varnothing$, then $\NCF(e_H)=0$.  Otherwise,
\begin{align}
\NCF(e_H)=\max\quad
&\sum_{S\in\mathcal C(H)}u_S
\label{eq:nae-objective}\\
\text{subject to}\quad
&\sum_{S:\,m_e(S)=j}u_S\le\frac13,
&&e\in E,
\quad j\in\{1,2,3\},
\label{eq:nae-capacity}\\
&u_S\ge0.
\nonumber
\end{align}
\end{theorem}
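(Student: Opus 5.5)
The plan is to specialize \cref{thm:orbit-quotient} to the one-sorted Boolean structure $\B=(\{0,1\};\NAE_3)$ with $G=\{\id,\neg\}$. I would proceed in three steps.

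\textbf{Hypotheses.} First I check that the hypotheses of the quotient theorem hold. $G$ acts transitively on $B=\{0,1\}$. The diagonal action on $\NAE_3$ has exactly three orbits $\mathcal O_1,\mathcal O_2,\mathcal O_3$, each of size two: complementation fixes no tuple, and it preserves the position of the minority bit. The uniform $\NAE_3$ table is $\mu(\mathbf b)=1/6$ on each of the six tuples. This is exactly \eqref{eq:orbit-tables} with $\theta(\mathcal O_j)=1/3$ and $|\mathcal O_j|=2$, and its singleton marginals are uniform. So the lift $e_H$ is the orbit-balanced lift of the instance $\X=H$, viewed as a structure whose constraints are the ordered hyperedges.

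\textbf{Identifying the quotient.} Next I identify the objects in \eqref{eq:orbit-objective}--\eqref{eq:orbit-capacity}.
\begin{itemize}
\item A homomorphism $H\to\B$ is a map $V\to\{0,1\}$ that is nonconstant on every edge, that is, a proper two-colouring.
\item Postcomposition by $\neg$ is global complementation, so $\mathscr H_G(H)=\mathcal C(H)$.
\item For an edge $e$ and a class $S$, the orbit $\omega_S(e)=G\,h(e)$ is $\mathcal O_j$ exactly when $j=m_e(S)$. The minority position is complement-invariant, so $m_e(S)$ is well defined.
\end{itemize}
Substituting these identifications turns \eqref{eq:orbit-capacity} into \eqref{eq:nae-capacity}. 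Vertices of $H$ lying in no edge are deleted by the lift convention, and this changes neither the colourings restricted to edges nor the LP, up to a harmless multiplicity. To be careful, I would note that colourings differing only on isolated vertices are never distinguished, because isolated vertices are removed before forming $\X$.

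\textbf{Empty case.} Finally, if $\mathcal C(H)=\varnothing$, then $\Hom(H,\B)=\varnothing$. The packing of \cref{prop:hom-packing} then has no columns, so its maximum is the empty sum $0$.

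\textbf{Main obstacle.} The only delicate point is bookkeeping: confirming that the minority-position labelling matches the orbit labelling, and that repeated vertices inside an ordered edge cause no trouble. An edge $(v,v,w)$ still forces $v\neq w$, and its minority position is $3$. The orbit theorem handles this uniformly, since it only uses $h(c)\in R^{\B}$.
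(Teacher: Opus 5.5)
Your proposal is correct and is the paper's argument: specialize \cref{thm:orbit-quotient} to $G=\{\id,\neg\}$. Postcomposition orbits of homomorphisms become complement classes of proper two-colourings, and the three tuple orbits of $\NAE_3$ become minority positions, each with orbit mass $1/3$. Your extra bookkeeping only makes explicit what the paper's one-line proof leaves implicit: the check $1/6=\tfrac13/2$, the no-column case giving value $0$, and the observation that colourings differing only on vertices outside every edge just duplicate LP columns.
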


\begin{proof}
Apply \cref{thm:orbit-quotient}.  Postcomposition orbits are precisely proper
colourings modulo complementation, and the three local orbit capacities are
$1/3$.
\end{proof}

Define the \emph{minority congestion}
\begin{equation}
\label{eq:beta}
\beta(H):=
\min_{q\in\D(\mathcal C(H))}
\max_{e\in E,\,j\in[3]}
\Pr_{S\sim q}[m_e(S)=j].
\end{equation}

\begin{theorem}[Radial and minimax formulas]
\label{thm:nae-beta}
For every two-colourable nonempty $H$,
\begin{equation}
\label{eq:ncf-beta}
\NCF(e_H)=\frac1{3\beta(H)}.
\end{equation}
Moreover,
\begin{equation}
\label{eq:beta-game}
\beta(H)=
\max_{y\in\D(E\times[3])}
\min_{S\in\mathcal C(H)}
\sum_{e\in E}y_{e,m_e(S)}.
\end{equation}
Consequently,
\[
\frac13\le\beta(H)\le1,
\qquad
\frac13\le\NCF(e_H)\le1.
\]
\end{theorem}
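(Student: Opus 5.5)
The plan is to start from the packing LP of \cref{thm:nae-packing} and rescale it radially. Since $H$ is two-colourable, $\mathcal C(H)\ne\varnothing$. Every feasible $u\ne 0$ can be written as $u=W q$ with $W=\sum_S u_S>0$ and $q\in\D(\mathcal C(H))$. Constraint \eqref{eq:nae-capacity} then reads $W\cdot\Pr_{S\sim q}[m_e(S)=j]\le 1/3$ for all $e,j$. So for fixed $q$ the best scale is
\[
W(q)=\frac{1}{3\max_{e,j}\Pr_{q}[m_e(S)=j]}.
\]
This maximum is positive, because $E\ne\varnothing$ and each $S$ has some minority position on each edge. Maximizing $W(q)$ over the compact simplex $\D(\mathcal C(H))$ is the same as minimizing the congestion, and that minimum is $\beta(H)$, attained by continuity. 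This gives \eqref{eq:ncf-beta}.

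For the minimax formula \eqref{eq:beta-game}, I will write $\beta(H)=\min_q\max_{y}\sum_{e,j}y_{e,j}\Pr_q[m_e(S)=j]$. This uses the fact that a maximum of a linear functional over the simplex $\D(E\times[3])$ is attained at a vertex. The payoff is bilinear in $(q,y)$:
\[
\sum_S q_S\sum_e y_{e,m_e(S)}.
\]
Both strategy sets are finite simplices, so von Neumann's minimax theorem lets me swap min and max. For fixed $y$, the inner minimum over $q$ is attained at a pure strategy $S$, which yields \eqref{eq:beta-game}. As a cross-check, I could also derive the same identity from the LP dual in \cref{prop:hom-packing} after orbit quotienting, by normalizing the dual vector $y$.

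For the bounds, the upper bound $\beta\le1$ is trivial since the quantity is a probability. For the lower bound $\beta\ge 1/3$, I fix any edge $e$. For every $q$,
\[
\sum_{j=1}^3\Pr_q[m_e(S)=j]=1,
\]
so some $j$ has probability at least $1/3$. Alternatively, I can take $y$ uniform on $\{e\}\times[3]$ in \eqref{eq:beta-game}. The bounds $\frac13\le\NCF(e_H)\le1$ then follow from \eqref{eq:ncf-beta}. The only point needing care is the radial step: I must exclude $u=0$, and check that the optimum is attained rather than only approached. Both are handled by $\mathcal C(H)\neq\varnothing$ and compactness, so I expect no real obstacle.
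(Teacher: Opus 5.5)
Your proposal is correct and follows essentially the same route as the paper. It uses the radial rescaling $u=Wq$ of the minority packing LP to get $\NCF(e_H)=1/(3\beta(H))$, the finite minimax theorem with the inner minimum attained at a pure colouring class for \eqref{eq:beta-game}, and the fact that the three minority probabilities on each edge sum to one for the range; your added care about excluding $u=0$ and attainment by compactness, and the LP-dual cross-check (which the paper records in its appendix on the direct dual), are harmless refinements of the same argument.
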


\begin{proof}
Write any nonzero feasible vector in \eqref{eq:nae-objective} as $u=Wq$,
where $W=\sum_Su_S$ and $q$ is a distribution on $\mathcal C(H)$.  The
capacity constraints become
\[
W\max_{e,j}\Pr_q[m_e(S)=j]\le\frac13.
\]
For a fixed $q$, the largest feasible $W$ is the reciprocal of three times the
displayed maximum.  Minimizing over $q$ proves \eqref{eq:ncf-beta}.

Let $A$ be the $0$--$1$ matrix with rows $(e,j)$, columns $S$, and
$A_{(e,j),S}=\1[m_e(S)=j]$.  Then
\[
\beta(H)=
\min_{q\in\D(\mathcal C(H))}
\max_{y\in\D(E\times[3])}y^{\mathsf T}Aq.
\]
The finite minimax theorem permits interchanging min and max.  For fixed $y$,
the minimum over $q$ is attained at a pure colouring class, yielding
\eqref{eq:beta-game}.  For every fixed edge, the three minority probabilities
sum to one, giving the displayed range.
\end{proof}

\begin{corollary}[Sharp endpoints]
\label{cor:nae-endpoints}
For a two-colourable nonempty $H$:
\begin{enumerate}[label=(\roman*)]
\item $\NCF(e_H)=1$ if and only if some distribution on proper colourings
makes every minority position on every edge have probability $1/3$;
\item $\NCF(e_H)=1/3$ if and only if some event $(e,j)$ is frozen, meaning
$m_e(S)=j$ for every $S\in\mathcal C(H)$.
\end{enumerate}
\end{corollary}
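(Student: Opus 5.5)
Both parts follow from \cref{thm:nae-beta}. Since $\NCF(e_H)=1/(3\beta(H))$ with $\beta(H)\in[1/3,1]$, statement (i) is equivalent to $\beta(H)=1/3$, and statement (ii) is equivalent to $\beta(H)=1$. In each case I reduce the condition on $\beta$ to the displayed combinatorial condition, using the fact that the minimum in \eqref{eq:beta} is attained because it is a continuous function minimized over a compact simplex.

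For (i), suppose first that $\beta(H)=1/3$. Take a minimizing $q$. For every edge $e$, the three probabilities $\Pr_q[m_e(S)=j]$ sum to one and each is at most $1/3$, so each equals exactly $1/3$. Conversely, if some distribution $q$ gives every minority position on every edge probability $1/3$, then the maximum in \eqref{eq:beta} is $1/3$ for that $q$, so $\beta(H)\le1/3$. The lower bound from \cref{thm:nae-beta} then forces equality. This case is routine.

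For (ii), suppose some event $(e,j)$ is frozen, meaning $m_e(S)=j$ for every $S\in\mathcal C(H)$. Then every distribution $q$ gives $\Pr_q[m_e(S)=j]=1$, so $\beta(H)\ge1$, and hence $\beta(H)=1$. Alternatively, the dual \eqref{eq:beta-game} gives the same conclusion by putting $y$ as a point mass on $(e,j)$.

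The converse of (ii) is the main step. Assume no event is frozen. Then for each pair $(e,j)$ there is a class $S_{e,j}\in\mathcal C(H)$ with $m_e(S_{e,j})\ne j$. Let $q$ be the uniform distribution on the finite set $\{S_{e,j}: e\in E,\ j\in[3]\}$, which has at most $3|E|$ elements. For every event $(e,j)$, at least one atom of $q$ misses it. Hence $\Pr_q[m_e(S)=j]\le1-1/|\supp q|<1$ for all $(e,j)$. Since there are finitely many events, the maximum over events is strictly less than one, so $\beta(H)<1$ and therefore $\NCF(e_H)>1/3$.

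The only point needing care is this strictness. It relies on $\beta$ being a minimum over a fixed finite mixture rather than an infimum, and that mixture is exhibited explicitly above.
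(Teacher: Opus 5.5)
Your proposal is correct and follows essentially the same route as the paper: you reduce (i) and (ii) to $\beta(H)=1/3$ and $\beta(H)=1$ via $\NCF(e_H)=1/(3\beta(H))$. For the non-frozen converse you exhibit an explicit uniform mixture; the paper takes the uniform distribution on all of $\mathcal C(H)$, whereas you take it on your witness set $\{S_{e,j}\}$, and the argument is otherwise the same.
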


\begin{proof}
Part (i) is equality $\beta=1/3$.  A frozen event gives $\beta=1$.  Conversely,
if no event is frozen, the uniform distribution on the finite set
$\mathcal C(H)$ gives every event probability strictly below one, so
$\beta<1$.
\end{proof}

\subsection{Cut-face representation}

For $S\subseteq V$, let $\delta_S(uv)=1$ when exactly one of $u,v$ lies in
$S$.  The cut polytope is
\[
\CUT(V):=\operatorname{conv}\{\delta_S:S\subseteq V\};
\]
see \cite{DezaLaurent1997} for its classical geometry.  For an ordered edge
$e=(a,b,c)$, define
\begin{align}
L_{e,1}(x)&:=\frac{x_{ab}+x_{ac}-x_{bc}}2,\nonumber\\
L_{e,2}(x)&:=\frac{x_{ab}+x_{bc}-x_{ac}}2,
\label{eq:minority-linear}\\
L_{e,3}(x)&:=\frac{x_{ac}+x_{bc}-x_{ab}}2.\nonumber
\end{align}
Set
\begin{equation}
\label{eq:proper-face}
\Fcut_H:=
\left\{
x\in\CUT(V):
x_{ab}+x_{ac}+x_{bc}=2
\text{ for every }\{a,b,c\}\in E
\right\}.
\end{equation}

\begin{theorem}[Exact cut-polytope radial formula]
\label{thm:cut-radial}
The polytope $\Fcut_H$ is the convex hull of cut vectors of proper
colourings of $H$.  Moreover,
\begin{equation}
\label{eq:beta-cut}
\beta(H)=
\min_{x\in\Fcut_H}\max_{e\in E,\,j\in[3]}L_{e,j}(x),
\end{equation}
and
\begin{equation}
\label{eq:ncf-cut}
\NCF(e_H)=
\left(
3\min_{x\in\Fcut_H}\max_{e,j}L_{e,j}(x)
\right)^{-1}.
\end{equation}
Finally,
\begin{equation}
\label{eq:two-thirds-characterization}
\NCF(e_H)=1
\quad\Longleftrightarrow\quad
\exists x\in\CUT(V)
\text{ with }
x_{uv}=\frac23
\text{ for every pair }uv\subseteq e\in E.
\end{equation}
\end{theorem}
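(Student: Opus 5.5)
The plan has four steps: identify the face, transfer minority probabilities linearly, substitute into \cref{thm:nae-beta}, and read off the endpoint criterion.

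\textbf{The face $\Fcut_H$.} For any $S\subseteq V$ and any triple $\{a,b,c\}$, the cut vector satisfies $\delta_S(ab)+\delta_S(ac)+\delta_S(bc)\in\{0,2\}$. The value is $0$ exactly when the triple is monochromatic and $2$ otherwise. Hence each inequality $x_{ab}+x_{ac}+x_{bc}\le 2$ is valid for $\CUT(V)$. Since $\Fcut_H$ is cut out of $\CUT(V)$ by setting several valid inequalities to equality, it is a face. A face of a polytope is the convex hull of the vertices lying on it. The vertices on this face are exactly the cut vectors $\delta_S$ with every edge nonmonochromatic, that is, proper colourings. The map $S\mapsto\delta_S$ identifies $S$ with its complement, so these vertices are in bijection with $\mathcal C(H)$. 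If $\mathcal C(H)=\varnothing$ the face is empty and the formulas are vacuous, so I assume $H$ is two-colourable.

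\textbf{Minority probabilities are linear in $x$.} Take a proper colouring $S$ and an edge $e=(a,b,c)$. If the minority position is $1$, then $\delta_S(ab)=\delta_S(ac)=1$ and $\delta_S(bc)=0$. This gives $L_{e,1}=1$ and $L_{e,2}=L_{e,3}=0$, and the other positions are symmetric. So $L_{e,j}(\delta_S)=\1[m_e(S)=j]$ on the vertices of $\Fcut_H$. By linearity, if $x=\sum_S q_S\delta_S$ with $q\in\D(\mathcal C(H))$, then $L_{e,j}(x)=\Pr_{q}[m_e(S)=j]$. Conversely, every $x\in\Fcut_H$ has such a representation by the first step.

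\textbf{Substitution into \cref{thm:nae-beta}.} The minimization over $q$ in \eqref{eq:beta} becomes a minimization over $x\in\Fcut_H$, which gives \eqref{eq:beta-cut}. Then \eqref{eq:ncf-cut} follows directly from \eqref{eq:ncf-beta}.

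\textbf{The endpoint criterion \eqref{eq:two-thirds-characterization}.} By \cref{cor:nae-endpoints}(i), $\NCF(e_H)=1$ holds iff some $x\in\Fcut_H$ has $L_{e,j}(x)=1/3$ for all $e$ and $j$.
\begin{itemize}
\item Forward direction: solving the linear system, $x_{ab}=L_{e,1}+L_{e,2}$, and similarly for the other pairs. So all three pair coordinates of each edge equal $2/3$.
\item Converse: take $x\in\CUT(V)$ with $x_{uv}=2/3$ on every pair inside an edge. Then each edge sum is $2$, so $x\in\Fcut_H$, and each $L_{e,j}(x)=(2/3+2/3-2/3)/2=1/3$.
\end{itemize}

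The one step that needs care is the face argument in the first step: the validity of $\le 2$ on $\CUT(V)$, and the fact that faces are the convex hulls of their vertices. Everything else is linear bookkeeping.
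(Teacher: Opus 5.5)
Your proposal is correct and follows essentially the same route as the paper. Both arguments use the same steps: the inequality $x_{ab}+x_{ac}+x_{bc}\le 2$ is valid on $\CUT(V)$ with equality exactly at cuts non-monochromatic on the edge; the identity $L_{e,j}(\delta_S)=\1[m_e(S)=j]$ transfers minority probabilities linearly; substitution into the $\beta$-formula gives the radial formulas; and inverting the linear system (your $x_{ab}=L_{e,1}+L_{e,2}$) gives the $2/3$ endpoint.

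One small point: the last equivalence is not vacuous when $H$ is not two-colourable, as your aside suggests. It still holds trivially there, because $\NCF(e_H)=0$ and any $x$ with all pair values $2/3$ would lie in the empty face $\Fcut_H$. The paper's proof also passes over this case.
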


\begin{proof}
Every cut meets a triangle in zero or two cut edges, so
$x_{ab}+x_{ac}+x_{bc}\le2$ is valid for $\CUT(V)$.  Equality selects exactly
the cut vertices nonmonochromatic on that hyperedge.  Intersecting the
corresponding supporting hyperplanes over all hyperedges proves the first
claim.

For a proper cut vector, $L_{e,j}$ is the indicator that position $j$ is the
minority on $e$.  By linearity, if $x$ is generated by a distribution on
proper cuts, then $L_{e,j}(x)$ is the corresponding minority probability.
This proves \eqref{eq:beta-cut}, and \eqref{eq:ncf-cut} follows from
\cref{thm:nae-beta}.

If all three minority probabilities equal $1/3$, solving
\eqref{eq:minority-linear} gives
$x_{ab}=x_{ac}=x_{bc}=2/3$.  Conversely, pair values $2/3$ give all three
minority probabilities $1/3$ and their triangle sum is two.  This proves
\eqref{eq:two-thirds-characterization}.
\end{proof}

\subsection{Two verified intermediate values}

The following examples are fully enumerated in
\cref{app:enumeration}; no computational assertion is left to ``inspection.''

\begin{example}[$\NCF=2/3$]
\label{ex:ncf-two-thirds}
Let
\[
V=\{1,2,3,4,5\},
\qquad
E_{2/3}=\{123,124,125,134,135,145\}.
\]
The five colouring classes are represented by
\[
00111,\quad01011,\quad01101,\quad01110,\quad01111.
\]
Weight the first four by $1/6$ and the fifth by zero.  The enumeration in
\cref{tab:h23} shows that every edge-position capacity is used at most twice,
so the primal value is $2/3$.  In the dual
\eqref{eq:hom-dual-objective}--\eqref{eq:hom-dual-cover}, charge the
first-position events of edges $123$ and $145$ by one.  Every colouring class
meets at least one charged event, while the dual objective is $2/3$.
Therefore $\NCF=2/3$.
\end{example}

\begin{example}[$\NCF=1/2$]
\label{ex:ncf-half}
Add edge $234$ to obtain $E_{1/2}=E_{2/3}\cup\{234\}$.  Exactly three
colouring classes remain:
\[
00111,\quad01011,\quad01101.
\]
Weight each by $1/6$.  The table in \cref{tab:h12} verifies all capacities.
For the dual, charge the first-position events of $125,135,145$ by $1/2$.
Every remaining colouring meets exactly two charged events, and the dual
objective is $1/2$.  Thus $\NCF=1/2$.
\end{example}

\section{Pure \texorpdfstring{$\NAE_3$}{NAE3} completeness without a gate}
\label{sec:pure-nae-hardness}

This section resolves the gate-free exact-complexity problem.  The only local
relation is Boolean $\NAE_3$, every local table is uniform on its six tuples,
and every constructed hypergraph is two-colourable.  The reduction first
identifies an exact fractional-colouring endpoint and then removes the single
high-occurrence anchor by a probability-preserving equality gadget.

\subsection{The anchor hypergraph and the stable-set polytope}

Let $G=(V,E)$ be a finite simple graph with no isolated vertices.  Choose an
orientation of every edge.  The \emph{anchor hypergraph} $A(G)$ has vertex set
$V\cup\{r\}$ and one ordered hyperedge
\[
(r,u,v)
\]
for every oriented edge $u\to v$ of $G$.  It is always properly two-colourable:
set $r=0$ and every vertex of $G$ to one.

Write $\mathcal I(G)$ for the independent sets of $G$ and
\[
\STAB(G):=\operatorname{conv}
\{\1_I:I\in\mathcal I(G)\}\subseteq\R^V
\]
for the stable-set polytope.

\begin{theorem}[Exact anchor formula]
\label{thm:anchor-formula}
For every graph $G$ without isolated vertices,
\begin{equation}
\label{eq:anchor-beta}
\beta(A(G))
=
\min_{x\in\STAB(G)}
\max_{uv\in E}
\bigl\{x_u,x_v,1-x_u-x_v\bigr\}.
\end{equation}
The value is independent of the chosen edge orientations.
\end{theorem}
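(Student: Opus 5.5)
We identify proper two-colourings of $A(G)$ modulo complementation with independent sets of $G$, and then rewrite the minority congestion $\beta(A(G))$ of \eqref{eq:beta} as an optimization over $\STAB(G)$.

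\textbf{Step 1 (colourings as independent sets).} Normalize each complement class $S\in\mathcal C(A(G))$ by the representative with $r=0$, and let $I_S:=\{v\in V: v\text{ coloured }0\}$. A hyperedge $(r,u,v)$ is monochromatic exactly when $u=v=0=r$, so properness means that no edge $uv$ has both ends in $I_S$. Hence $S\mapsto I_S$ is a bijection $\mathcal C(A(G))\to\mathcal I(G)$, and the empty set corresponds to the colouring $r=0$, all else $1$.

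\textbf{Step 2 (minority positions).} On the ordered edge $(r,u,v)$ with $r=0$ there are three cases.
\begin{itemize}
\item If $u,v\notin I$, the colours are $(0,1,1)$ and $r$ is the minority (position 1).
\item If $u\in I$ and $v\notin I$, the colours are $(0,0,1)$ and $v$ is the minority (position 3).
\item If $u\notin I$ and $v\in I$, the minority is $u$ (position 2).
\end{itemize}
Thus $\1[m_e=3]=\1_I(u)$, $\1[m_e=2]=\1_I(v)$, and $\1[m_e=1]=1-\1_I(u)-\1_I(v)$. This identity uses independence, since it excludes $u,v\in I$ simultaneously.

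\textbf{Step 3 (averaging).} For $q\in\D(\mathcal C(A(G)))$, push $q$ forward to a distribution on $\mathcal I(G)$ and set $x:=\mathbb E_q[\1_I]\in\STAB(G)$. By linearity, the three minority probabilities on the hyperedge for $uv$ are $x_u$, $x_v$, and $1-x_u-x_v$. Conversely, every $x\in\STAB(G)$ arises from some such $q$. Therefore the inner maximum in \eqref{eq:beta}, taken over edges and positions, equals $\max_{uv\in E}\{x_u,x_v,1-x_u-x_v\}$, and minimizing over $q$ is the same as minimizing over $x\in\STAB(G)$, which gives \eqref{eq:anchor-beta}.

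\textbf{Step 4 (orientation and minimum).} The set $\{x_u,x_v,1-x_u-x_v\}$ is symmetric in $u$ and $v$, so reversing an edge's orientation only permutes positions 2 and 3. The value is therefore independent of the orientations. The minimum is attained because $\STAB(G)$ is compact and the objective is continuous. Isolated vertices are excluded only so that every vertex of $V$ appears in some hyperedge, which matches the lift convention.

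The only delicate point is the case analysis in Step 2, in particular checking that $1-x_u-x_v$ is exactly the probability that $r$ is the minority, which needs the independence constraint. Everything else is linearity together with the definition of $\beta$.
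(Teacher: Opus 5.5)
Your proposal is correct and follows essentially the same route as the paper's proof: the bijection between the $r=0$ colouring representatives and independent sets of $G$; the per-edge minority probabilities $x_u$, $x_v$, $1-x_u-x_v$, which rely on the disjointness that independence gives; and orientation reversal swapping positions $2$ and $3$. Your compactness remark on attainment of the minimum is a harmless detail that the paper leaves implicit.
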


\begin{proof}
Fix the representative of each colouring class with $r=0$.  For a proper
colouring $S$, let
\[
I(S):=\{v\in V:S(v)=0\}.
\]
The triple $(r,u,v)$ is nonmonochromatic exactly when $u$ and $v$ do not both
belong to $I(S)$.  Hence $S\mapsto I(S)$ is a bijection from proper colouring
classes of $A(G)$ to independent sets of $G$.

Let $q$ be a distribution on independent sets and put
$x_v=\Pr_{I\sim q}[v\in I]$.  Since $I$ is independent, the events
$u\in I$ and $v\in I$ are disjoint for every edge $uv$.  On the ordered triple
$(r,u,v)$ the three minority-position probabilities are therefore
\[
\Pr[m=1]=1-x_u-x_v,
\qquad
\Pr[m=2]=x_v,
\qquad
\Pr[m=3]=x_u.
\]
Conversely every $x\in\STAB(G)$ is the marginal vector of a distribution on
independent sets.  Substitution into the definition of $\beta$ proves
\eqref{eq:anchor-beta}.  Reversing an edge merely exchanges the last two
entries of the displayed set.
\end{proof}

We use the standard fractional-colouring linear program
\begin{equation}
\label{eq:fractional-chromatic}
\chi_f(G)=
\min\left\{
\sum_{I\in\mathcal I(G)}\lambda_I:
\lambda_I\ge0,
\ \sum_{I\ni v}\lambda_I\ge1\ \ (v\in V)
\right\};
\end{equation}
see \cite[Chapter~3]{ScheinermanUllman2011}.

\begin{lemma}[Uniform stable-set point]
\label{lem:uniform-stab}
For every graph $G$ and every real $t\ge1$,
\[
\chi_f(G)\le t
\quad\Longleftrightarrow\quad
t^{-1}\1\in\STAB(G).
\]
\end{lemma}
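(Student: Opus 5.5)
Both directions come from reading the fractional-colouring LP \eqref{eq:fractional-chromatic} as a statement about scaled distributions on independent sets. The argument uses two facts about $\STAB(G)$. It contains the origin, because $\varnothing\in\mathcal I(G)$. It is down-closed inside the nonnegative orthant: if $0\le y\le x$ and $x\in\STAB(G)$, then $y\in\STAB(G)$.

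For ($\Rightarrow$), suppose $\chi_f(G)\le t$. Take an optimal solution $(\lambda_I)$ with total weight $\Lambda=\chi_f(G)\le t$, and put $q_I:=\lambda_I/\Lambda$. This is a distribution on $\mathcal I(G)$. Its marginal vector $x:=\sum_I q_I\1_I$ lies in $\STAB(G)$ and satisfies $x_v\ge 1/\Lambda\ge 1/t$ for every $v$. By down-closedness, $t^{-1}\1\in\STAB(G)$.

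The down-closedness claim itself needs a short proof, and I expect it to be the only mildly delicate step. First, since every subset of an independent set is independent, deleting vertex $v$ from each set in the support lowers coordinate $v$ to zero and stays in the polytope. Second, averaging the original point with this modified one, with weights $(1-\alpha)$ and $\alpha$, multiplies $x_v$ by any chosen factor $\alpha\in[0,1]$ and leaves the other coordinates unchanged. Applying this coordinate by coordinate reaches any $y$ with $0\le y\le x$.

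For ($\Leftarrow$), suppose $t^{-1}\1\in\STAB(G)$. Write $t^{-1}\1=\sum_I q_I\1_I$ with $q$ a probability distribution on $\mathcal I(G)$. Setting $\lambda_I:=t\,q_I$ gives $\lambda_I\ge 0$, $\sum_{I\ni v}\lambda_I=t\cdot t^{-1}=1$ for every $v$, and total weight $t$. This is feasible for \eqref{eq:fractional-chromatic}, so $\chi_f(G)\le t$.

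The hypothesis $t\ge1$ plays no role in the argument. It is just the natural range, since $\chi_f(G)\ge1$ whenever $V\neq\varnothing$.
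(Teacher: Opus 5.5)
Your proof is correct and follows the paper's argument: the reverse direction is identical ($\lambda_I=tq_I$), and the forward direction likewise normalizes a fractional colouring to a distribution on independent sets with all marginals at least $1/t$, then uses down-monotonicity of $\STAB(G)$. The differences are cosmetic. You normalize by $\Lambda=\chi_f(G)$, which needs $V\neq\varnothing$, a trivial case; the paper instead pads the empty set up to total mass $t$. You also prove down-closedness by deterministic coordinate-wise convex combinations, where the paper uses independent random thinning with retention probability $(1/t)/y_v$.
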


\begin{proof}
Suppose first that $t^{-1}\1=\sum_Iq_I\1_I$ for a distribution $q$ on
independent sets.  Then $\lambda_I=tq_I$ is feasible in
\eqref{eq:fractional-chromatic} with objective $t$.

Conversely, let $(\lambda_I)$ be feasible with total mass at most $t$.  Add
weight to the empty independent set so that the total is exactly $t$, and set
$q_I=\lambda_I/t$.  Its marginal vector $y$ satisfies $y_v\ge1/t$ for every
$v$.  The stable-set polytope is down-monotone: from a random independent set
with marginal vector $y$, independently retain an included vertex $v$ with
conditional probability $(1/t)/y_v$.  The retained set is independent and has
marginal vector $t^{-1}\1$.  Thus $t^{-1}\1\in\STAB(G)$.
\end{proof}

\begin{theorem}[Fractional-colouring endpoint]
\label{thm:anchor-fractional-colouring}
For every graph $G$ without isolated vertices,
\begin{align}
\beta(A(G))=\frac13
&\quad\Longleftrightarrow\quad
\chi_f(G)\le3,
\label{eq:beta-frac}\\
\NCF(e_{A(G)})=1
&\quad\Longleftrightarrow\quad
\chi_f(G)\le3.
\label{eq:ncf-frac}
\end{align}
\end{theorem}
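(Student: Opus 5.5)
The plan is to reduce both equivalences to the anchor formula of \cref{thm:anchor-formula} and the uniform stable-set point of \cref{lem:uniform-stab}. The second equivalence \eqref{eq:ncf-frac} follows from the first \eqref{eq:beta-frac}. Since $A(G)$ is always two-colourable and nonempty (as $G$ has no isolated vertices, it has at least one edge), \cref{thm:nae-beta} gives $\NCF(e_{A(G)})=1/(3\beta(A(G)))$. Hence $\NCF=1$ exactly when $\beta=1/3$. So the work is in \eqref{eq:beta-frac}.

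For the direction $\chi_f(G)\le3\Rightarrow\beta=1/3$, I would apply \cref{lem:uniform-stab} with $t=3$ to get $x=\tfrac13\1\in\STAB(G)$. Substituting this $x$ into \eqref{eq:anchor-beta}, every edge $uv$ gives the triple $\{1/3,1/3,1/3\}$. Therefore $\beta(A(G))\le1/3$, and the general lower bound $\beta\ge1/3$ from \cref{thm:nae-beta} forces equality.

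For the converse, I would suppose $\beta(A(G))=1/3$. The minimum in \eqref{eq:anchor-beta} is attained, since $\STAB(G)$ is compact and the objective is continuous, so some $x\in\STAB(G)$ satisfies $\max\{x_u,x_v,1-x_u-x_v\}\le1/3$ for every edge $uv$. This gives $x_u,x_v\le1/3$ and $x_u+x_v\ge2/3$, hence $x_u=x_v=1/3$ on every edge. Because $G$ has no isolated vertices, every vertex lies on an edge, so $x=\tfrac13\1$. \cref{lem:uniform-stab} with $t=3$ then yields $\chi_f(G)\le3$.

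I expect no real obstacle. The only points needing care are the attainment of the minimum, which follows from compactness (or, equivalently, from minimax attainment in the finite LP), and the use of the no-isolated-vertices hypothesis. That hypothesis is needed twice: to pin every coordinate to $1/3$, and to guarantee that $A(G)$ has a nonempty edge set so that \cref{thm:nae-beta} applies.
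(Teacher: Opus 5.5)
Your proposal is correct and follows essentially the same route as the paper: both directions go through the anchor formula of \cref{thm:anchor-formula}, force $x=\tfrac13\1$ on every edge using the no-isolated-vertices hypothesis, invoke \cref{lem:uniform-stab} at $t=3$, and then obtain \eqref{eq:ncf-frac} from $\NCF=1/(3\beta)$. Your explicit remarks on attainment of the minimum and on nonemptiness of $A(G)$ make precise two points the paper leaves implicit.
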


\begin{proof}
For every edge $uv$, the three numbers in \eqref{eq:anchor-beta} sum to one,
so their maximum is at least $1/3$.  If the optimum equals $1/3$, then on every
edge all three numbers equal $1/3$.  Hence $x_u=x_v=1/3$ for every edge; since
there are no isolated vertices, $x=(1/3)\1$.  Conversely that vector makes all
three terms equal $1/3$.  By \cref{lem:uniform-stab},
$(1/3)\1\in\STAB(G)$ exactly when $\chi_f(G)\le3$.  This proves
\eqref{eq:beta-frac}, and \eqref{eq:ncf-frac} follows from
\cref{thm:nae-beta}.
\end{proof}

Thus noncontextuality of the uniform pure-$\NAE_3$ model already detects a
classical fractional-colouring threshold, while every anchor instance has an
obvious global section.

\begin{proposition}[Fixed-threshold fractional colouring]
\label{prop:fractional-three-npcomplete}
For every fixed rational $t>2$, the problem of deciding whether
$\chi_f(G)\le t$ is NP-complete.  In particular, deciding whether
$\chi_f(G)\le3$ is NP-complete.
\end{proposition}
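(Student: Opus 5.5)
Membership in NP comes first. For rational $t$, the LP \eqref{eq:fractional-chromatic} has an optimal basic solution supported on at most $|V|$ independent sets, with rational weights of polynomial bit length. Cramer's rule bounds these entries by the determinant of a $0$--$1$ matrix of dimension at most $|V|$. A certificate therefore lists at most $|V|$ independent sets and their rational weights $\lambda_I$. The verifier checks independence, the covering inequalities $\sum_{I\ni v}\lambda_I\ge1$, and $\sum_I\lambda_I\le t$, all in polynomial time.

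For hardness I would not build a new reduction. The intended route is to cite the known result of Lund and Yannakakis and of Grötschel--Lovász--Schrijver: deciding whether $\chi_f(G)\le k$ is NP-hard for each fixed $k\ge3$. For an arbitrary rational $t>2$ I would reduce from deciding $\chi(G)\le3$, via a gap-preserving graph product. The standard tool is the lexicographic product with a Kneser graph, or equivalently the gap result $\chi(G)\le3\Rightarrow\chi_f(G[K(a,b)])\le 3a/b$ versus $\chi_f$ large. A cleaner fixed-threshold argument is a known paper-level statement, so I would cite it directly: Lund--Yannakakis show that for every constant $k$, distinguishing $\chi_f\le k$ from larger values is NP-hard. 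The case $t=3$ needed later follows directly from the cited theorem.

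The main obstacle is matching the threshold exactly at every rational $t>2$, rather than only approximately. Here I would use the fact that $\chi_f$ is multiplicative under lexicographic product with a graph $K$, namely $\chi_f(G[K])=\chi_f(G)\chi_f(K)$. I would also use the fact that Kneser graphs $K(a,b)$ realize every rational value $a/b\ge2$. Fixing the threshold $t$ then means starting from a source problem with threshold $t_0$ and choosing $K$ with $\chi_f(K)=t/t_0$. This requires a hard base threshold $t_0\le t/2$, or more precisely $t/t_0\ge2$ or $t/t_0=1$. That restriction is why I would rely on the literature for small thresholds $2<t<4$, including the case $t=3$ used in \cref{thm:anchor-fractional-colouring}, rather than derive them by scaling.
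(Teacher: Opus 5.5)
Your NP-membership argument is exactly the paper's: an optimal basic solution of \eqref{eq:fractional-chromatic} supported on at most $|V|$ independent sets, Cramer's rule with Hadamard's bound for bit length, and a directly checkable certificate. The hardness half, however, has a genuine gap.

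First, the route ``reduce from $\chi(G)\le3$ via a lexicographic product'' cannot work. The map $G\mapsto G[K]$ multiplies $\chi_f$ exactly, so it only transports a threshold that is already a threshold on $\chi_f$. A NO instance of $3$-colourability can have $\chi_f\le3$: the Gr\"otzsch graph has $\chi=4$ and $\chi_f=29/10$. So nothing forces $\chi_f(G[K])>3\chi_f(K)$ on the NO side.

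Second, Kneser scaling only moves an already hard fractional threshold $t_0$ upward by a factor $\chi_f(K)\in\{1\}\cup[2,\infty)$. Even granting hardness at every integer $k\ge3$, you reach only $\{3,4,5\}\cup[6,\infty)$. Every non-integer rational in $(2,6)$ is left uncovered, not merely those in $(2,4)$ as you state.

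Third, and decisively, the base case you defer to the literature is attributed to sources that do not state it. This includes the case $t=3$ on which \cref{thm:pure-nae-npcomplete} rests. Lund--Yannakakis prove $n^{\eps}$-ratio inapproximability, whose thresholds grow with the instance size. Gr\"otschel--Lov\'asz--Schrijver give NP-hardness of computing $\chi_f$ through the weighted optimization--separation equivalence. Neither is a fixed-threshold theorem, and the paper itself flags both points in \cref{rem:fractional-hardness-source,rem:normalization-bottleneck}.

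The paper takes NP-hardness for every fixed rational $t>2$ directly from the fixed-threshold theorem \cite[Theorem~3.9.2]{ScheinermanUllman2011}. Replacing your Lund--Yannakakis/GLS claim and the Kneser scaling with that citation, or with a direct reduction at threshold $t$ itself, is what your proof is missing.
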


\begin{proof}
NP-hardness is the fixed-threshold theorem stated in
\cite[Theorem~3.9.2]{ScheinermanUllman2011}.  We record membership in NP because
it is needed explicitly below.  The fractional-colouring linear program has
one covering constraint for each vertex.  If its optimum is at most the fixed
rational $t$, then it has a basic feasible solution supported on at most
$|V(G)|$ independent sets.  The corresponding weights solve a rational linear
system with a $0$--$1$ coefficient matrix.  Cramer's rule and Hadamard's bound
therefore give polynomial encoding length.  A certificate lists those
independent sets and rational weights; independence, all vertex-covering
inequalities, and total weight at most $t$ are checkable in polynomial time.
\end{proof}

\begin{remark}[Hardness source]
\label{rem:fractional-hardness-source}
The reduction below uses only the fixed unweighted threshold $t=3$, exactly as
stated in \cite[Theorem~3.9.2]{ScheinermanUllman2011}.  It does not use the
weighted optimization theorem of Gr\"otschel, Lov\'asz, and Schrijver.
Approximation hardness for fractional chromatic number is developed in
\cite{LundYannakakis1994}.
\end{remark}

\subsection{Quantitative stability of the anchor construction}

\begin{theorem}[Two-sided anchor stability]
\label{thm:anchor-stability}
Let $G$ have no isolated vertices and put $t=\chi_f(G)\ge3$.  Then
\begin{equation}
\label{eq:beta-two-sided}
\frac{t-1}{2t}
\le \beta(A(G))
\le 1-\frac{2}{t}.
\end{equation}
Consequently,
\begin{equation}
\label{eq:ncf-two-sided}
\frac{t}{3(t-2)}
\le \NCF(e_{A(G)})
\le \frac{2t}{3(t-1)}.
\end{equation}
At $t=3$, both bounds are equalities: $\beta(A(G))=1/3$ and
$\NCF(e_{A(G)})=1$.
\end{theorem}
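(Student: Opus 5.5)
The plan is to prove the two inequalities in \eqref{eq:beta-two-sided} separately, both from the exact anchor formula \cref{thm:anchor-formula}. The upper bound will come from exhibiting one explicit point of $\STAB(G)$. The lower bound will come from converting an optimal point of $\STAB(G)$ back into a fractional colouring. The $\NCF$ bounds \eqref{eq:ncf-two-sided} then follow by substituting into $\NCF(e_{A(G)})=1/(3\beta(A(G)))$ from \cref{thm:nae-beta}.

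\emph{Upper bound.} Since $\chi_f(G)=t$, \cref{lem:uniform-stab} gives $x=t^{-1}\1\in\STAB(G)$. On every edge $uv$ the three entries of \eqref{eq:anchor-beta} are $1/t$, $1/t$ and $1-2/t$. Because $t\ge3$, we have $1-2/t\ge1/t$, so the maximum over all edges is $1-2/t$. Minimising over $\STAB(G)$ can only do better, hence $\beta(A(G))\le1-2/t$.

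\emph{Lower bound.} Let $x\in\STAB(G)$ attain the minimum in \eqref{eq:anchor-beta}, and write $\beta=\beta(A(G))$. If $\beta\ge1/2$, the bound holds trivially, since $(t-1)/(2t)<1/2$. Otherwise, for every edge $uv$ we have $x_u\le\beta$ and $1-x_u-x_v\le\beta$, so $x_v\ge1-\beta-x_u\ge1-2\beta$. Since $G$ has no isolated vertices, every vertex lies on some edge, and therefore $x\ge(1-2\beta)\1$ coordinatewise.

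By down-monotonicity of $\STAB(G)$, we get $(1-2\beta)\1\in\STAB(G)$. This is exactly the retention argument already used in the proof of \cref{lem:uniform-stab}. With $s=1/(1-2\beta)\ge1$, \cref{lem:uniform-stab} gives $t=\chi_f(G)\le s$. Rearranging $1-2\beta\le1/t$ yields $\beta\ge(t-1)/(2t)$. This step is the only place needing care: one must invoke the no-isolated-vertex hypothesis to get a uniform lower bound on every coordinate, and handle the degenerate case $\beta\ge1/2$ separately so that $s$ is well defined.

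\emph{Consequences.} Since $\beta\mapsto1/(3\beta)$ is decreasing, the two bounds on $\beta$ invert to
\[
\frac{1}{3(1-2/t)}=\frac{t}{3(t-2)}
\le\NCF(e_{A(G)})\le
\frac{2t}{3(t-1)}.
\]
At $t=3$, both $(t-1)/(2t)$ and $1-2/t$ equal $1/3$, so $\beta(A(G))=1/3$ and $\NCF(e_{A(G)})=1$. This is consistent with \cref{thm:anchor-fractional-colouring}.
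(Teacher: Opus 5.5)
Your proposal is correct and follows the paper's proof essentially step for step. You take the upper bound from the point $t^{-1}\1\in\STAB(G)$ supplied by \cref{lem:uniform-stab}, and the lower bound from $x_v\ge1-2\beta$ at every non-isolated vertex, down-monotonicity, and \cref{lem:uniform-stab} again, with the trivial case $\beta\ge1/2$ split off; the $\NCF$ bounds then follow by inverting $\NCF=1/(3\beta)$.
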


\begin{proof}
Let $b=\beta(A(G))$ and choose $x\in\STAB(G)$ attaining
\eqref{eq:anchor-beta}.  For every edge $uv$,
\[
x_u\le b,\qquad x_v\le b,\qquad x_u+x_v\ge1-b.
\]
Every vertex has a neighbour, and hence
\[
x_v\ge1-b-x_u\ge1-2b
\qquad(v\in V(G)).
\]
If $b<1/2$, down-monotonicity of $\STAB(G)$ gives
$(1-2b)\1\in\STAB(G)$.  By \cref{lem:uniform-stab},
$t\le(1-2b)^{-1}$, or equivalently
$b\ge(t-1)/(2t)$.  If $b\ge1/2$, the same inequality is automatic because
$(t-1)/(2t)<1/2$.

For the upper bound, \cref{lem:uniform-stab} gives
$t^{-1}\1\in\STAB(G)$.  Substitution into \eqref{eq:anchor-beta} yields
\[
b\le\max\left\{\frac1t,1-\frac2t\right\}
=1-\frac2t
\]
for $t\ge3$.  Applying \cref{thm:nae-beta} and inverting the positive bounds
proves \eqref{eq:ncf-two-sided}.
\end{proof}

\begin{corollary}[Quantitative endpoint stability]
\label{cor:anchor-endpoint-stability}
If $0\le\delta<1/3$ and
$\NCF(e_{A(G)})\ge1-\delta$, then
\begin{equation}
\label{eq:chi-from-ncf}
\chi_f(G)\le\frac{3(1-\delta)}{1-3\delta}.
\end{equation}
Equivalently, if $\chi_f(G)\ge3+\eta$, then
\begin{equation}
\label{eq:ncf-from-chi-gap}
\NCF(e_{A(G)})
\le1-\frac{\eta}{3(2+\eta)}.
\end{equation}
\end{corollary}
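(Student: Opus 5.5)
The plan is to derive both parts from the two-sided bounds of \cref{thm:anchor-stability}, treating the cases $\chi_f(G)\le3$ and $\chi_f(G)>3$ separately. The first bound \eqref{eq:chi-from-ncf} is the contrapositive direction of the lower bound $\beta(A(G))\ge(t-1)/(2t)$. Recall that $\NCF(e_{A(G)})=1/(3\beta)$ by \cref{thm:nae-beta}.

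\emph{First claim.} Suppose $\NCF(e_{A(G)})\ge1-\delta$ with $0\le\delta<1/3$.
\begin{itemize}
\item If $\chi_f(G)\le3$, then \eqref{eq:chi-from-ncf} holds trivially, because $3(1-\delta)/(1-3\delta)\ge3$.
\item Otherwise $t=\chi_f(G)>3$, and \cref{thm:anchor-stability} gives $\NCF\le 2t/(3(t-1))$. Combining,
\[
1-\delta\le\frac{2t}{3(t-1)},\qquad\text{that is,}\qquad 3(1-\delta)(t-1)\le 2t.
\]
Rearranging gives $t(1-3\delta)\le 3(1-\delta)$. Since $1-3\delta>0$, dividing yields \eqref{eq:chi-from-ncf}.
\end{itemize}
Only the upper bound of \eqref{eq:ncf-two-sided} is used here, which in turn rests on the lower bound for $\beta$. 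No new polytope argument is needed.

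\emph{Second claim.} Suppose $\chi_f(G)\ge3+\eta$ with $\eta>0$; for $\eta=0$ the claim is trivial.
\begin{itemize}
\item The function $t\mapsto 2t/(3(t-1))$ is decreasing for $t>1$.
\item Hence $\NCF\le 2(3+\eta)/(3(2+\eta))$.
\item Writing $2(3+\eta)=3(2+\eta)-\eta$ gives exactly $1-\eta/(3(2+\eta))$.
\end{itemize}

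The word ``equivalently'' deserves a check, since the two statements are not literally the same. Substituting $\delta=\eta/(3(2+\eta))$ into the first bound makes $3(1-\delta)/(1-3\delta)=3+\eta$, so the two inequalities are inverse forms of the same monotone relation. I would state this as a remark rather than prove a formal equivalence.

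The only delicate step is the sign bookkeeping when dividing by $1-3\delta$. This is exactly why the hypothesis $\delta<1/3$ is needed.
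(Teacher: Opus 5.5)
Your proposal is correct and is essentially the paper's argument: both rest on the single inequality $\beta(A(G))\ge(t-1)/(2t)$. The paper inverts it directly as $\chi_f(G)\le(1-2\beta)^{-1}$, which needs no assumption $t\ge3$; you instead invoke the packaged bound $\NCF\le 2t/(3(t-1))$, which is why you need the case split on $\chi_f(G)\le3$. Your monotonicity step in the second claim usefully covers $\chi_f(G)>3+\eta$, which the paper's ``substituting $t=3+\eta$'' leaves implicit.
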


\begin{proof}
The hypothesis and $\NCF=1/(3\beta)$ imply
$\beta\le[3(1-\delta)]^{-1}$.  The lower-bound argument in
\cref{thm:anchor-stability} gives
$\chi_f(G)\le(1-2\beta)^{-1}$ and hence \eqref{eq:chi-from-ncf}.
Substituting $t=3+\eta$ into the upper bound in
\eqref{eq:ncf-two-sided} gives \eqref{eq:ncf-from-chi-gap}.
\end{proof}

\subsection{A probability-preserving pure-NAE equality gadget}

The anchor $r$ has unbounded occurrence.  We now compile every repeated
variable occurrence through a constant-size gadget without introducing any
relation other than $\NAE_3$.

For terminals $x,y$ and fresh auxiliary variables $a,b,c$, let
$\mathcal Q(x,y;a,b,c)$ contain the six hyperedges
\begin{equation}
\label{eq:equality-gadget}
\begin{split}
&(x,a,b),\ (x,a,c),\ (x,b,c),\\
&(y,a,b),\ (y,a,c),\ (y,b,c).
\end{split}
\end{equation}

\begin{lemma}[Support and unique exact local-law kernel]
\label{lem:equality-gadget}
The gadget $\mathcal Q(x,y;a,b,c)$ has a proper two-colouring extension if and
only if $x=y$.  Fix a random supported extension for which the common terminal
bit $T:=x=y$ is uniform.  Then all six gadget edges are uniformly distributed
on $\NAE_3$ if and only if, conditional on $T=t$, exactly one of $a,b,c$ is
chosen uniformly to equal $t$ and the other two equal $1-t$.  In particular,
the probability-preserving conditional extension kernel is unique.
\end{lemma}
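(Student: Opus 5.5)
The proof splits into a support part and a local-law part. Both are finite case analyses on the three auxiliary bits $a,b,c$, combined with a short linear-algebra argument on the induced joint law.

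\emph{Support.} The three hyperedges $(x,a,b),(x,a,c),(x,b,c)$ are simultaneously nonmonochromatic exactly when, among $\{a,b,c\}$, no pair has both entries equal to $x$. Equivalently, at most one of $a,b,c$ equals $x$.
\begin{itemize}
\item If all of $a,b,c$ equal $1-x$, every pair $\{z,w\}\subseteq\{a,b,c\}$ satisfies $z=w\ne x$, so each triple is nonmonochromatic. If exactly one equals $x$, every pair contains a $(1-x)$-bit, which again suffices.
\item If two of them equal $x$, the edge on that pair is monochromatic.
\end{itemize}
The same holds with $y$ in place of $x$. Suppose $x\ne y$. Then at most one of $a,b,c$ equals $x$ and at most one equals $y=1-x$, so at most two bits are accounted for among the three binary values. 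This is a contradiction, so no extension exists. If $x=y=t$, the choice $a=b=c=1-t$ gives an extension. The complete list of extensions for $x=y=t$ is therefore: $(1-t,1-t,1-t)$ together with the three patterns having exactly one coordinate equal to $t$.

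\emph{Local law, sufficiency.} Condition on $T=t$. The law of $(a,b,c)$ is a mixture with weight $p_0(t)$ on the all-$(1-t)$ pattern and weights $p_a(t),p_b(t),p_c(t)$ on the three single-$t$ patterns. First check that the claimed kernel works. Take $p_0=0$ and $p_a=p_b=p_c=1/3$, and consider the edge $(x,a,b)$:
\begin{itemize}
\item the tuple $(t,1-t,1-t)$ occurs iff $c$ is the chosen coordinate, with probability $1/3$;
\item $(t,t,1-t)$ occurs iff $a$ is chosen, with probability $1/3$;
\item $(t,1-t,t)$ occurs iff $b$ is chosen, with probability $1/3$.
\end{itemize}
Averaging over uniform $T$ gives each of the six $\NAE_3$ tuples probability $1/6$. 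By the symmetry of the gadget in the roles of $a,b,c$ and of $x,y$, the same holds for every edge.

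\emph{Local law, necessity.} Assume every edge is uniform on $\NAE_3$. For edge $(x,a,b)$, the tuple $(t,t,1-t)$ has probability $\tfrac12 p_a(t)$ and must equal $1/6$, so $p_a(t)=1/3$. The edge $(x,b,c)$ similarly forces $p_b(t)=1/3$, and $(x,a,c)$ forces $p_c(t)=1/3$. Hence $p_0(t)=0$, and the conditional kernel is exactly the stated one. It is unique because it is determined pointwise for each $t$.

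The only delicate step is making sure that uniformity on all edges uses each of the six local events. It does not need to: the $(t,t,1-t)$-type events on three edges suffice, and they pin all four weights because the weights sum to one. I expect no genuine obstacle beyond carefully listing the supported patterns in the first step.
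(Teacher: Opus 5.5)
Your proof is correct and follows essentially the same route as the paper: list the four supported auxiliary patterns for each terminal value $t$, then match conditional tuple probabilities against $1/6$, with normalization forcing the all-$(1-t)$ weight to vanish. The paper reads all three first-coordinate-$t$ tuples off the two edges $(x,a,b)$ and $(x,a,c)$; you instead isolate each $p_u(t)$ by the event $\{x=t,\ u=t\}$. One small indexing slip needs fixing: on the ordered edge $(x,a,c)$ the tuple $(t,t,1-t)$ is the event $a=t$ and so detects $p_a(t)$ again, so to pin $p_c(t)$ you should use the tuple $(t,1-t,t)$ on that edge, which occurs exactly on the pattern with $c=t$ and has probability $\tfrac12 p_c(t)$.
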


\begin{proof}
Fix $x=0$.  The first three constraints say that no two of $a,b,c$ can both be
zero, so at most one auxiliary bit equals zero.  Fixing $x=1$ analogously says
that at most one auxiliary bit equals one.  If $x\ne y$, both requirements hold
simultaneously for three bits, which is impossible.  If $x=y=t$, the supported
auxiliary assignments are exactly
\[
o_t:=(1-t,1-t,1-t)
\quad\text{and}\quad
e_{t,a}:=(t,1-t,1-t),\quad
e_{t,b}:=(1-t,t,1-t),\quad
e_{t,c}:=(1-t,1-t,t).
\]
Thus equality of the terminals is necessary and sufficient for support
satisfiability.

Let $\alpha_t,p_{t,a},p_{t,b},p_{t,c}$ be the conditional probabilities of
$o_t,e_{t,a},e_{t,b},e_{t,c}$ given $T=t$.  On the edge $(x,a,b)$, the three
possible $\NAE_3$ tuples with first coordinate $t$ occur with conditional
probabilities
\[
p_{t,a},\qquad p_{t,b},\qquad \alpha_t+p_{t,c}.
\]
Because $T$ is uniform, uniformity of this edge on all six $\NAE_3$ tuples is
equivalent to
\[
p_{t,a}=p_{t,b}=\alpha_t+p_{t,c}=\frac13.
\]
Applying the same argument to $(x,a,c)$ gives
\[
p_{t,a}=p_{t,c}=\alpha_t+p_{t,b}=\frac13.
\]
Hence
\[
p_{t,a}=p_{t,b}=p_{t,c}=\frac13,
\qquad \alpha_t=0.
\]
This holds for both $t=0$ and $t=1$, proving uniqueness.  Conversely, the
stated kernel gives, on any gadget edge, the three $\NAE_3$ tuples with first
coordinate $t$ with conditional probability $1/3$ each; averaging over the
uniform terminal gives all six tuples with probability $1/6$.  The same
calculation applies to the three edges containing $y$, since $y=x$ on the
support.
\end{proof}

Given a hypergraph $H$, form its \emph{bounded-occurrence compilation}
$\widetilde H$ as follows.  Create one copy of a vertex for each of its
hyperedge occurrences and use those copies in the corresponding source
hyperedges.  For every original vertex, connect its occurrence copies in a
path, replacing every path equality by a fresh gadget
\eqref{eq:equality-gadget}.  Every edge of $\widetilde H$ carries the uniform
$\NAE_3$ table.

\begin{theorem}[Exact value-preserving compilation]
\label{prop:compilation}
After isolated vertices are deleted, every pure-$\NAE_3$ hypergraph $H$
satisfies
\begin{equation}
\label{eq:exact-compilation}
\boxed{\NCF(e_{\widetilde H})=\NCF(e_H).}
\end{equation}
The compiled hypergraph is simple and $3$-uniform.  Every occurrence copy has
degree at most seven and every auxiliary vertex has degree four.  If a proper
two-colouring of $H$ is supplied, a proper two-colouring of
$\widetilde H$ is computable in linear time.
\end{theorem}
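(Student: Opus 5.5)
We prove the equality by two inequalities between the homomorphism packings of \cref{prop:hom-packing}. This is legitimate because both lifts use only uniform $\NAE_3$ tables on their edges, so their noncontextual fractions are the optimal values of \eqref{eq:hom-pack-objective}--\eqref{eq:hom-pack-capacity}. The combinatorial bridge is \cref{lem:equality-gadget}. Every proper colouring of $\widetilde H$ is constant along each path of occurrence copies, since each gadget forces its terminals equal. Hence restriction to occurrence copies gives a map $\rho:\Hom(\widetilde H,\B)\to\Hom(H,\B)$. For a source edge $e$ of $H$ and its copy $\tilde e$ in $\widetilde H$, we have $h(\tilde e)=\rho(h)(e)$. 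Empty-edge and uncolourable cases are trivial on both sides: if $H$ has no proper colouring, neither does $\widetilde H$, and both values are $0$.

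\textbf{Lower bound} $\NCF(e_{\widetilde H})\ge\NCF(e_H)$. Take an optimal packing $(w_S)$ for $H$. By the averaging of \cref{thm:orbit-quotient}, we may assume it is invariant under complementation. Lift each colouring $S$ by extending every gadget independently with the kernel of \cref{lem:equality-gadget}: given the terminal bit $t$, one of the three auxiliaries is chosen uniformly to equal $t$ and the other two equal $1-t$. Source-edge loads in $\widetilde H$ then coincide with those in $H$, so source capacities hold. For a gadget edge $(x,a,b)$, the load on the tuple $(t,\cdot,\cdot)$ is
\[
\tfrac13\sum_{S:S(x)=t}w_S.
\]
By complement invariance this equals $\tfrac13\cdot\tfrac12 W=W/6\le1/6$, where $W$ is the total mass. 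Gadget capacities therefore hold with total mass preserved.

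\textbf{Upper bound} $\NCF(e_{\widetilde H})\le\NCF(e_H)$. Given a feasible $(w_h)$ on $\widetilde H$, push it forward along $\rho$, setting $u_S=\sum_{\rho(h)=S}w_h$. Source-edge capacities are inherited verbatim because $h(\tilde e)=\rho(h)(e)$, and the objective is unchanged. This direction simply discards the gadget constraints, so it is routine. The only subtle point in the whole argument is the lower bound: the gadget capacities require the terminal bit to be balanced, which is why the complement-averaging step must be performed first. I expect this to be the main thing to get right, and I would check it carefully, including for $W<1$.

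\textbf{Structural claims.} These are bookkeeping. A path on $k$ occurrence copies uses $k-1$ gadgets, so an interior copy lies in two gadgets and one source edge, giving degree $3+3+1=7$; endpoints have degree at most $4$. Each auxiliary lies in $4$ of the six gadget edges. Simplicity holds because all auxiliary triples are distinct and fresh, and source edges use distinct copies. Given a proper colouring of $H$, copy its values to all occurrence copies. For each gadget with terminal value $t$, set the auxiliaries to $(t,1-t,1-t)$, which is a supported assignment by \cref{lem:equality-gadget}. This runs in linear time.
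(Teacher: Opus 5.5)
Your proof is correct and follows the paper's argument. The lower bound uses the same kernel lift of each source colouring through the gadgets. The upper bound uses the same collapse and aggregation along the forced-constant occurrence paths. The structural claims rest on the same degree count $1+2\cdot 3=7$.

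The only difference is how you obtain the gadget-edge capacity. You first make the packing complement-invariant, so that $\sum_{S:S(v)=t}w_S=W/2$. The paper instead notes that every feasible packing already satisfies $\sum_{S:S(v)=t}w_S\le 1/2$: summing the three capacity-$1/6$ tuples with $v$-coordinate $t$ on any incident source edge gives this bound. Your averaging step is therefore valid but not necessary, contrary to your remark that it must be performed first.
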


\begin{proof}
We work with the unquotiented homomorphism-packing formulation, whose variables
are the proper two-colourings themselves and whose local tuple capacities are
$1/6$.

First let $(w_h)$ be a feasible packing for $H$.  For a vertex $v$ and
$t\in\{0,1\}$ put
\[
M_v(t):=\sum_{h:\,h(v)=t}w_h.
\]
Choose an edge incident with $v$ and sum the three tuple-capacity inequalities
whose $v$-coordinate is $t$.  Each has capacity $1/6$, so
\begin{equation}
\label{eq:terminal-half-bound}
M_v(t)\le\frac12.
\end{equation}
For each colouring $h$, assign every occurrence copy the value of its original
vertex and extend each equality gadget independently by the kernel from
\cref{lem:equality-gadget}: conditional on the common terminal value $t$,
choose uniformly which one of the three auxiliary vertices equals $t$.
Distribute the mass $w_h$ over the resulting compiled colourings according to
the product of these kernels.

Every source edge retains exactly its old tuple load.  On a gadget edge and
for a fixed terminal value $t$, each of the three supported tuples with that
terminal value has conditional probability $1/3$.  Its total load is therefore
$M_v(t)/3\le1/6$ by \eqref{eq:terminal-half-bound}.  Thus the extended packing
is feasible and has the same total mass, proving
$\NCF(e_{\widetilde H})\ge\NCF(e_H)$.

Conversely, every supported colouring of $\widetilde H$ is constant along each
path of occurrence copies, because every equality gadget forces its two
terminals to agree.  Collapse each path to the corresponding original vertex
and aggregate the weights of compiled colourings with the same collapsed
colouring.  Source-edge tuple loads are unchanged and hence remain at most
$1/6$.  The aggregated vector is a feasible packing for $H$ with the same
objective value.  Therefore
$\NCF(e_{\widetilde H})\le\NCF(e_H)$, establishing
\eqref{eq:exact-compilation}.

An occurrence copy lies in one source edge and in at most two path gadgets;
each terminal appears in three edges of a gadget, so its degree is at most
$1+2\cdot3=7$.  Each auxiliary appears in four gadget edges.  Simplicity and
$3$-uniformity are immediate because every gadget uses fresh auxiliaries.  A
proper source colouring extends by choosing, in every gadget, one auxiliary
equal to the common terminal and the other two opposite to it.
\end{proof}

\begin{theorem}[Conditional constant-gap transfer]
\label{thm:conditional-pure-gap}
Fix $\eta>0$.  Suppose it is NP-hard to distinguish graphs satisfying
\[
\chi_f(G)\le3
\qquad\text{from graphs satisfying}\qquad
\chi_f(G)\ge3+\eta.
\]
Then it is NP-hard, on simple $3$-uniform hypergraphs of maximum degree seven
supplied with a proper two-colouring, to distinguish
\[
\NCF(e_H)=1
\qquad\text{from}\qquad
\NCF(e_H)\le1-\frac{\eta}{3(2+\eta)}.
\]
\end{theorem}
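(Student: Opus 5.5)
The reduction is the composition $G\mapsto A(G)\mapsto\widetilde{A(G)}$, which is computable in polynomial time. Before building it, I would normalize the promise instance. Deleting isolated vertices does not change $\chi_f$ once at least one edge remains, since $\chi_f\ge2$. Edgeless graphs have $\chi_f\le1\le3$, so they are mapped to a fixed yes-instance, for example a single $\NAE_3$ edge, which has $\NCF=1$. We may therefore assume $G$ has no isolated vertices. We then orient the edges arbitrarily; by \cref{thm:anchor-formula} the value does not depend on this choice.

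\textbf{Completeness and soundness.} If $\chi_f(G)\le3$, then \cref{thm:anchor-fractional-colouring} gives $\NCF(e_{A(G)})=1$. If instead $\chi_f(G)\ge3+\eta$, then \cref{cor:anchor-endpoint-stability}, via \eqref{eq:ncf-from-chi-gap}, gives
\[
\NCF(e_{A(G)})\le1-\frac{\eta}{3(2+\eta)}.
\]
Next, \cref{prop:compilation} transfers both values exactly to $\widetilde{A(G)}$. That theorem applies because $A(G)$ has no isolated vertices: every vertex of $G$ lies in a hyperedge, and so does $r$ provided $E\ne\varnothing$.

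\textbf{Structural requirements.} Since $A(G)$ is $3$-uniform, \cref{prop:compilation} also guarantees that $\widetilde{A(G)}$ is simple, $3$-uniform, and of maximum degree at most seven. To supply the required proper two-colouring, take the colouring of $A(G)$ with $r=0$ and all vertices of $G$ equal to $1$. This is proper because each hyperedge $(r,u,v)$ is then $011$. The linear-time extension in \cref{prop:compilation} lifts it to a proper two-colouring of $\widetilde{A(G)}$. Thus the constructed instance, together with its colouring, is produced in polynomial time. The output length is polynomial because the compilation uses one gadget per path step, and there are only linearly many occurrences.

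\textbf{Main obstacle.} The only delicate point is that the gap must survive compilation. This is exactly why the compilation theorem is an equality, not just an endpoint-preservation statement. Once that is in hand, the remaining bookkeeping is the degenerate-graph handling above and checking that the soundness bound uses $t\ge3$, which holds since $t\ge3+\eta$.
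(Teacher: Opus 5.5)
Your proposal is correct and follows the paper's own proof. You map $G$ to $A(G)$, take the YES case from \cref{thm:anchor-fractional-colouring} and the NO case from \eqref{eq:ncf-from-chi-gap}, transfer both values exactly through \cref{prop:compilation}, and supply the colouring with the anchor at $0$ and the graph vertices at $1$; your extra handling of isolated vertices and edgeless graphs is harmless bookkeeping that the paper carries out only in the proof of \cref{thm:pure-nae-npcomplete}.
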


\begin{proof}
Map $G$ to the anchor $A(G)$.  The YES case follows from
\cref{thm:anchor-fractional-colouring}; the NO-case bound is
\eqref{eq:ncf-from-chi-gap}.  Apply the exact compiler
\cref{prop:compilation}.  The colouring with the anchor equal to zero and all
graph vertices equal to one extends through every equality gadget and supplies
the promised proper two-colouring.
\end{proof}

\begin{remark}[Normalization bottleneck]
\label{rem:normalization-bottleneck}
General approximation hardness for fractional chromatic number is known
\cite{LundYannakakis1994}, but it does not automatically give the normalized
promise $3$ versus $3+\eta$.  The theorem isolates exactly the additional gap
statement needed to obtain gate-free constant-gap hardness for pure
$\NAE_3$.
\end{remark}

\subsection{A bounded-occurrence NP-completeness theorem}

Let \PureNAENC{} denote the following promise problem.
The input is a simple $3$-uniform hypergraph $H$ of maximum vertex degree at
most seven, together with a proper two-colouring of $H$.  Every edge carries
the uniform distribution on $\NAE_3$.  The question is whether $\CF(e_H)=0$.

\begin{theorem}[Pure NAE noncontextuality is NP-complete]
\label{thm:pure-nae-npcomplete}
\PureNAENC{} is NP-complete.  Consequently, exact
evaluation of $\CF$ is NP-hard on this promised family, and deciding
$\CF(e_H)>0$ is coNP-complete.
\end{theorem}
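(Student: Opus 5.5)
The proof has two halves: NP-hardness by a polynomial reduction from fixed-threshold fractional colouring, and membership in NP via a succinct certificate for $\NCF(e_H)=1$. Recall that $\CF=1-\NCF$, so $\CF(e_H)=0$ is the same as $\NCF(e_H)=1$.

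\emph{Hardness.} Start from \cref{prop:fractional-three-npcomplete}: deciding $\chi_f(G)\le3$ is NP-complete. Given $G$, first delete isolated vertices; this does not change whether $\chi_f(G)\le3$, and if no edges remain the answer is trivially yes. Build the anchor hypergraph $A(G)$. By \cref{thm:anchor-fractional-colouring}, $\NCF(e_{A(G)})=1$ if and only if $\chi_f(G)\le3$. Then apply the compiler of \cref{prop:compilation}, which gives $\widetilde{A(G)}$ with the same $\NCF$. This hypergraph is simple, $3$-uniform, and of maximum degree at most seven. The colouring that puts the anchor at $0$ and all graph vertices at $1$ is proper on $A(G)$, and the linear-time extension through the equality gadgets supplies the promised proper two-colouring. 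Every step is polynomial in the size of $G$, so the promise problem is NP-hard.

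\emph{Membership in NP.} This is the step that needs care, because the packing LP of \cref{prop:hom-packing} or \cref{thm:nae-packing} has exponentially many columns. By \cref{cor:nae-endpoints}(i), $\NCF(e_H)=1$ holds exactly when some distribution on proper colouring classes gives every minority event probability exactly $1/3$. This is a system of linear equalities in nonnegative variables: $3|E|$ equations plus normalisation. If it is feasible, Carath\'eodory's theorem, or equivalently passing to a basic feasible solution, gives a solution supported on at most $3|E|+1$ colouring classes. The certificate lists these colourings together with rational weights. As in the proof of \cref{prop:fractional-three-npcomplete}, the weights solve a $0$--$1$ linear system, so Cramer's rule and Hadamard's bound give them polynomial encoding length. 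The verifier checks that each listed colouring is proper, that the weights are nonnegative and sum to one, and that every minority probability equals $1/3$. All of this takes polynomial time, so \PureNAENC{} is in NP.

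\emph{Consequences.} Computing $\CF$ exactly decides whether $\CF=0$, so exact evaluation of $\CF$ is NP-hard on this promised family. The complementary question, whether $\CF(e_H)>0$, is the complement of an NP-complete promise problem on the same promise set, so it is coNP-complete. Throughout, one should note that the promise (simple, degree at most seven, supplied colouring) is itself checkable in polynomial time. So the promise problem is really an ordinary decision problem over a polynomial-time recognisable input set, and complementation is unproblematic.
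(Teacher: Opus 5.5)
Your proposal is correct and follows the paper's proof. It uses the same reduction (delete isolated vertices, build $A(G)$, apply the exact compiler, output the anchor-zero/graph-one colouring), the same basic-feasible-solution certificate with encoding length bounded by Cramer's rule and Hadamard's inequality, and the same complementation step for the consequences. The only variation is in the certificate's linear system. You certify $\beta(H)=1/3$ over colouring classes with $3|E|+1$ minority-event rows (via \cref{cor:nae-endpoints}), whereas the paper certifies a global section directly over proper colourings with six tuple-probability rows per edge. Your certificate is slightly smaller but relies on the orbit-quotient theorem.
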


\begin{proof}
For NP-hardness, start from the fixed-threshold fractional-colouring problem
$\chi_f(G)\le3$.  Delete isolated vertices, which does not affect this
inequality unless the graph is edgeless, a trivial case.  Construct $A(G)$ and
then its bounded-occurrence compilation $\widetilde{A(G)}$.  By
\cref{thm:anchor-fractional-colouring,prop:compilation},
\[
\chi_f(G)\le3
\quad\Longleftrightarrow\quad
\NCF(e_{\widetilde{A(G)}})=1
\quad\Longleftrightarrow\quad
\CF(e_{\widetilde{A(G)}})=0.
\]
The construction is linear in the number of graph-edge incidences.  The
colouring with all anchor copies zero and all graph-vertex copies one extends
through every equality gadget, so the promised colouring is produced by the
reduction.

For membership in NP, let $m$ be the number of hyperedges.  Noncontextuality is
feasibility of a rational linear system whose columns are proper colourings and
whose rows prescribe the six event probabilities on each edge.  If feasible,
a basic feasible solution uses at most $6m+1$ colourings.  Because the
constraint matrix has entries in $\{0,1\}$ and the right-hand side has entries
in $\{1,1/6\}$, Cramer's rule and Hadamard's bound give polynomial encoding
length for all basic weights.  A certificate listing these colourings and
weights is therefore polynomial in the input size and can be checked directly.
The complement and exact-evaluation claims follow.
\end{proof}

\begin{corollary}[Exact relation to $\mathsf P$ versus $\mathsf{NP}$]
\label{cor:pnp-equivalence}
The following are equivalent:
\begin{enumerate}[label=(\roman*)]
\item $\mathsf P=\mathsf{NP}$;
\item \PureNAENC{} is decidable in polynomial time;
\item there is a polynomial-time exact evaluator for $\CF$ on the same
promised family.
\end{enumerate}
This is a completeness equivalence, not an unconditional separation result.
\end{corollary}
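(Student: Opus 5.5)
The plan is to prove the corollary by chaining the implications (i)$\Rightarrow$(ii)$\Rightarrow$(iii)$\Rightarrow$(i), using only \cref{thm:pure-nae-npcomplete}.

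\emph{(i)$\Rightarrow$(ii).} By \cref{thm:pure-nae-npcomplete}, \PureNAENC{} lies in NP. If $\mathsf P=\mathsf{NP}$, it is therefore decidable in polynomial time. The promise causes no difficulty: membership in NP was shown by an NP verifier that works on every input, promised or not. Checking the promise itself (simplicity, $3$-uniformity, maximum degree seven, and that the supplied colouring is proper) also takes polynomial time.

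\emph{(ii)$\Rightarrow$(iii).} I will read the implication so that it produces the exact value of $\CF$ rather than merely the bit $[\CF=0]$. The idea is to reuse the NP-membership argument. The quantity $\NCF(e_H)$ is the optimum of the packing LP from \cref{prop:hom-packing}, and that LP has exponentially many columns. An optimal basic solution, however, is supported on at most $6m$ colourings and has polynomial encoding length by Cramer's rule and Hadamard's bound. So $\NCF$ is a rational number with denominator bounded by $2^{\mathrm{poly}(m)}$.

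To locate it, I will binary search on threshold questions of the form ``$\NCF(e_H)\ge\tau$?''. Each such question is an NP question: a certificate is a packing of value at least $\tau$, and a polynomial-size one exists by the basic-solution bound. Under (ii), every NP problem is in $\mathsf P$, because \PureNAENC{} is NP-hard and therefore every NP language reduces to it. Hence each threshold query is answered in polynomial time. Continued-fraction rounding then recovers the exact rational value, giving (iii). The subtlety at this step is that (ii) directly decides only the single question $\CF=0$. Its NP-hardness is what lifts that one question to all the threshold queries.

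\emph{(iii)$\Rightarrow$(i).} An exact evaluator for $\CF$ decides whether $\CF(e_H)=0$, so \PureNAENC{} is in $\mathsf P$. By the NP-hardness part of \cref{thm:pure-nae-npcomplete}, namely the polynomial-time reduction from the NP-complete question $\chi_f(G)\le3$ (\cref{prop:fractional-three-npcomplete}), every NP problem is then in $\mathsf P$. The reduction produces promised instances together with their colourings, so the evaluator is applied only inside its domain.

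The main obstacle is keeping the direction (i)$\Rightarrow$(iii) honest. The certificates in the binary search must have size polynomial in the input, which means the bound on the denominator of the optimum must be explicit; Hadamard's bound on a $\{0,1\}$ matrix with right-hand sides $1/6$ supplies it. I will close by noting that the equivalence only relocates the question: it says nothing about whether $\mathsf P=\mathsf{NP}$.
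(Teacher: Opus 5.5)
Your proposal is correct and uses the same ingredients as the paper: NP-hardness of \PureNAENC{} via the anchor reduction, the polynomial-size basic-solution bound from the NP-membership argument, and a decision-to-optimization step once all of NP is in polynomial time. The paper only arranges the cycle as (iii)$\Rightarrow$(ii)$\Rightarrow$(i)$\Rightarrow$(iii); your (ii)$\Rightarrow$(iii) implicitly passes through (i), and it spells out the binary search with bounded denominators that the paper summarizes as ``the standard decision-to-optimization reduction.''
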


\begin{proof}
The implication (iii)$\Rightarrow$(ii) is immediate and
(ii)$\Rightarrow$(i) follows from \cref{thm:pure-nae-npcomplete}.  If
$\mathsf P=\mathsf{NP}$, rational LP optimization over a polynomially
verifiable, polynomial-support certificate family is computable in polynomial
time by the standard decision-to-optimization reduction, giving (iii).
\end{proof}

\begin{remark}[What has and has not been removed]
The switch relation is absent, the local language is exactly uniform
$\NAE_3$, support satisfiability is promised, and occurrence degree is bounded.
The construction still uses equality-compilation gadgets.  Exact value
preservation does not by itself supply a constant additive gap.  Removing those remaining features,
or proving a constant gap on the pure family, is a strictly stronger problem.
\end{remark}

\section{Exact gated composition}
\label{sec:gated-composition}

We now introduce the operation that converts any balanced source packing into
an always-satisfiable quantitative instance without losing its value.

Let the source template $\B$ have balanced realization
$(\pi_s,\mu_R)$.  Introduce a gate sort
\[
M=\{0,1\}
\]
with uniform marginal.  For every source relation
$R\subseteq B_{s_1}\times\cdots\times B_{s_r}$, define a gated relation
$\widehat R$ on
$M\times M\times B_{s_1}\times\cdots\times B_{s_r}$ by
\begin{align}
\widehat R_0
&:=\{(a,a,\mathbf b):a\in M,
\ \mathbf b\in B_{s_1}\times\cdots\times B_{s_r}\},
\label{eq:gate-baseline-support}\\
\widehat R_1
&:=\{(a,1-a,\mathbf b):a\in M,
\ \mathbf b\in R^{\B}\},
\label{eq:gate-active-support}\\
\widehat R&:=\widehat R_0\cup\widehat R_1.
\nonumber
\end{align}
Define probability tables
\begin{align}
\nu_{R,0}(a,a,\mathbf b)
&:=\frac12\prod_{j=1}^r\pi_{s_j}(b_j),
\label{eq:gate-baseline-table}\\
\nu_{R,1}(a,1-a,\mathbf b)
&:=\frac12\mu_R(\mathbf b),
\label{eq:gate-active-table}\\
\widehat\mu_R^{\,\vartheta}
&:=\vartheta\nu_{R,0}+(1-\vartheta)\nu_{R,1},
\qquad 0<\vartheta<1.
\label{eq:gated-table}
\end{align}
Both branch tables have uniform gate marginals and source-coordinate marginals
$\pi_s$, so the gated template is balanced.

Let $\X$ be a nonempty \emph{simple} source instance: every constraint scope
contains pairwise distinct variables and every variable occurs in a
constraint.  Construct $\widehat\X$ by adding two shared gate variables
$r,z\in M$ and replacing every source constraint
$c=(R,(x_1,\ldots,x_r))$ by
\[
\widehat c=
\widehat R(r,z,x_1,\ldots,x_r).
\]
Setting $z=r$ satisfies all gated relations independently of the source
variables.

\begin{theorem}[Exact affine gated-sum identity]
\label{thm:gated-affine}
For every simple nonempty source instance $\X$,
\begin{equation}
\label{eq:gated-affine}
\boxed{
	\NCF(e_{\widehat\X}^{\,\vartheta})
	=
	\vartheta+(1-\vartheta)\NCF(e_{\X}).
}
\end{equation}
Every gated instance has supported global assignments, and in fact
$\NCF(e_{\widehat\X}^{\,\vartheta})\ge\vartheta$.
\end{theorem}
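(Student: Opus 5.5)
We work throughout with the homomorphism packing of \cref{prop:hom-packing} applied to $\widehat\X$. The first step is to describe the columns. A homomorphism $\widehat h:\widehat\X\to\widehat\B$ consists of gate values $(r,z)$ and a source assignment $h$ of the source variables. If $z=r$, every gated constraint lies in $\widehat R_0$ for every $h$; we call these \emph{baseline columns}. If $z=1-r$, every gated constraint must lie in $\widehat R_1$, which forces $h\in\Hom(\X,\B)$; we call these \emph{active columns}. Since the gates $r,z$ are shared across all constraints, a column is baseline or active globally. Moreover, the local events $(a,a,\mathbf b)$ and $(a,1-a,\mathbf b)$ are disjoint. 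Hence the capacity rows split into two disjoint families: baseline rows with capacities $\vartheta\nu_{R,0}$ and active rows with capacities $(1-\vartheta)\nu_{R,1}$. The LP therefore decomposes as a sum of two independent packings, and the value is the sum of the two optima.

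For the active block, fix the gate pair $(a,1-a)$. The capacity of event $(a,1-a,\mathbf b)$ is $\tfrac{1-\vartheta}2\mu_R(\mathbf b)$. The columns with $r=a$ are exactly the homomorphisms $h$, with the same evaluation maps $h(c)$. Each of the two gate choices is therefore a copy of the source packing scaled by $(1-\vartheta)/2$, with value $\tfrac{1-\vartheta}2\NCF(e_\X)$. Summing over $a\in\{0,1\}$ gives $(1-\vartheta)\NCF(e_\X)$.

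For the baseline block, fix $r=z=a$. The capacity of $(a,a,\mathbf b)$ is $\tfrac\vartheta2\prod_j\pi_{s_j}(b_j)$. For the lower bound, take weight $\tfrac\vartheta2\prod_{x}\pi_{s(x)}(h(x))$ on every assignment $h$ of the source variables, that is, the product measure. Because $\X$ is simple, each scope has distinct variables. Summing the product over the variables outside a scope therefore gives load exactly $\tfrac\vartheta2\prod_j\pi_{s_j}(b_j)$ on each event, and the total mass is $\tfrac\vartheta2$. The upper bound follows from the singleton redundancy argument in \cref{prop:hom-packing}: summing the capacities of any one constraint over all $\mathbf b$ bounds the mass with $r=z=a$ by $\tfrac\vartheta2$. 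So each gate value contributes exactly $\tfrac\vartheta2$, and the block total is $\vartheta$.

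Adding the two blocks gives \eqref{eq:gated-affine}. The inequality $\NCF\ge\vartheta$ and the existence of supported global assignments follow from the baseline block, since $\NCF(e_\X)\ge0$. The main obstacle is the splitting step. It must be argued carefully that no single global column can load both baseline and active events, which holds because $r,z$ are shared and each local event determines whether $z=r$. It must also be checked that the capacity rows of $\widehat\X$ are exactly the union of the two families, with no cross terms, so the LP is genuinely a direct sum rather than merely bounded by one. Simplicity is used only in the baseline product-measure computation, where a repeated variable in a scope would break the factorization.
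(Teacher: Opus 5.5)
Your proposal is correct and follows essentially the paper's proof: the shared gates force a global split into baseline and active branches, summing one constraint's baseline capacities bounds the baseline mass by $\vartheta$, the product measure (using simplicity) attains that bound, and the active branch is a scaled copy of the source packing on each gate orientation. The only difference is presentational: you phrase the argument as an exact block-diagonal direct sum, split further by gate value, whereas the paper bounds an arbitrary feasible packing by aggregating the two active orientations into one source packing rescaled by $1/(1-\vartheta)$.
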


\begin{proof}
Every homomorphism of $\widehat\X$ belongs to exactly one of two global
branches.  In the baseline branch $z=r$ and the source variables are
unrestricted.  In the active branch $z\ne r$ and the source-variable
restriction is a homomorphism $h:\X\to\B$.

Let a feasible gated packing have baseline mass $W_0$ and active mass $W_1$.
Fix one gated constraint.  Summing its capacity inequalities over all tuples
in $\widehat R_0$ gives
\[
W_0\le
\widehat\mu_R^{\,\vartheta}(\widehat R_0)
=\vartheta.
\]
For each source homomorphism $h$, aggregate the active gated weights over the
two gate orientations and rescale:
\[
\bar w_h:=
\frac1{1-\vartheta}
\sum_{a\in\{0,1\}}
w_{(r=a,\,z=1-a,\,h)}.
\]
For a source constraint $c=(R,\mathbf x)$ and tuple
$\mathbf b\in R^{\B}$,
\begin{align*}
\sum_{h:h(c)=\mathbf b}\bar w_h
&=\frac1{1-\vartheta}
\sum_{a\in\{0,1\}}
\sum_{h:h(c)=\mathbf b}
w_{(a,1-a,h)}\\
&\le\frac1{1-\vartheta}
\sum_{a\in\{0,1\}}
\frac{1-\vartheta}{2}\mu_R(\mathbf b)
=\mu_R(\mathbf b).
\end{align*}
Thus $(\bar w_h)$ is feasible for the source packing, so
\[
W_1\le(1-\vartheta)\NCF(e_{\X}).
\]
This proves the upper bound in \eqref{eq:gated-affine}.

For the reverse inequality, first construct a baseline global distribution:
choose $r=z$ uniformly and choose every source variable independently from its
sort marginal $\pi_s$.  Simplicity of the scopes makes the local source tuple
law the product distribution in \eqref{eq:gate-baseline-table}.  Scale this
global distribution by mass $\vartheta$.

Next take an optimal source packing $(w_h)$.  For every source homomorphism
$h$ and every gate orientation $a\in\{0,1\}$, assign gated weight
\[
\frac{1-\vartheta}{2}w_h
\]
to $(r=a,z=1-a,h)$.  At an active relation event the load is at most
$\frac{1-\vartheta}{2}\mu_R(\mathbf b)$, exactly its capacity.  At a source
singleton event the baseline contributes $\vartheta\pi_s(b)$ and the active
packing contributes at most $(1-\vartheta)\pi_s(b)$; gate singleton loads are
at most $1/2$.  Hence the two subpackings can be added, giving total mass
\[
\vartheta+(1-\vartheta)\NCF(e_{\X}).
\]
\end{proof}

\begin{remark}[Exact composition, not gap bookkeeping]
A standard free-acceptance gadget proves only that one branch is always
available.  The new statement is the exact identity
\eqref{eq:gated-affine}: no packing mass can leak between the branches, and
all active mass is precisely a scaled source packing.  This permits exact
values, sharp hardness gaps, and polyhedral inheritance.
\end{remark}

\section{Hardness within always-satisfiable families}
\label{sec:gated-hardness}

\subsection{The Boolean \texorpdfstring{$\NAE_3$}{NAE3} switch: exact value and improved gap}

For the Boolean source, \eqref{eq:gate-baseline-support}--
\eqref{eq:gate-active-support} give the five-ary relation
\[
\mathsf S(r,z,x,y,w)
\quad\Longleftrightarrow\quad
(z=r)
\ \text{or}\ 
(z\ne r\text{ and }\NAE_3(x,y,w)).
\]
Its baseline branch has $16$ tuples and its active branch has $12$ tuples;
with uniform Boolean marginals, \eqref{eq:gated-table} is exactly
\[
\vartheta U_{\mathsf S_0}
+(1-\vartheta)U_{\mathsf S_1}.
\]
Given a $3$-uniform hypergraph $H$, use shared variables $r,z$ and one switch
constraint per hyperedge.

\begin{corollary}[Exact switch formula]
\label{cor:nae-switch-exact}
For every nonempty $3$-uniform hypergraph $H$,
\begin{equation}
\label{eq:nae-switch-exact}
\NCF(e_{\widehat H}^{\,\vartheta})
=
\begin{cases}
	\vartheta,
	& H\text{ is not two-colourable},\\[1mm]
	\displaystyle
	\vartheta+\frac{1-\vartheta}{3\beta(H)},
	& H\text{ is two-colourable}.
\end{cases}
\end{equation}
Every switch instance is satisfiable.
\end{corollary}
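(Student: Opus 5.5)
The corollary specializes \cref{thm:gated-affine} to the Boolean $\NAE_3$ source and then inserts the known value of $\NCF(e_H)$. First, I check that the switch instance $\widehat H$ is exactly the gated instance $\widehat\X$ for $\X=H$ with source template $\B=(\{0,1\},\NAE_3)$. This source template carries the uniform balanced realization $\pi=(1/2,1/2)$ and $\mu_{\NAE_3}=U_{\NAE_3}$.

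The hypotheses of \cref{thm:gated-affine} need to be verified. The instance is nonempty by assumption, and isolated vertices are deleted as in the lift construction, so every variable lies in a constraint. Scopes must consist of pairwise distinct variables. This holds because a $3$-uniform hyperedge has three distinct vertices, and that distinctness is what makes the baseline local law the product measure in \eqref{eq:gate-baseline-table}. With uniform marginals, the baseline table is uniform on the $16$ tuples of $\mathsf S_0$ with mass $\vartheta$, and the active table is uniform on the $12$ tuples of $\mathsf S_1$ with mass $1-\vartheta$. This matches the displayed $\vartheta U_{\mathsf S_0}+(1-\vartheta)U_{\mathsf S_1}$, so \eqref{eq:gated-affine} applies and gives $\NCF(e_{\widehat H}^{\,\vartheta})=\vartheta+(1-\vartheta)\NCF(e_H)$.

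Next comes the case split on two-colourability. If $H$ is not two-colourable, then $\mathcal C(H)=\varnothing$, and \cref{thm:nae-packing} gives $\NCF(e_H)=0$, so the value is $\vartheta$. If $H$ is two-colourable, \cref{thm:nae-beta} gives $\NCF(e_H)=1/(3\beta(H))$, which yields the second line of \eqref{eq:nae-switch-exact}. One point needs care: the uniform $\NAE_3$ table $U_{\NAE_3}$ coincides with the orbit-balanced table \eqref{eq:orbit-tables} with $\theta=(1/3,1/3,1/3)$. Indeed, each orbit has size two, so each tuple receives mass $1/6$. This identification is what lets us use \cref{thm:nae-packing}. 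Satisfiability is immediate: setting $z=r$ and choosing the remaining variables arbitrarily satisfies every switch constraint. Equivalently, $\NCF\ge\vartheta>0$ from \cref{thm:gated-affine}.

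I expect no real obstacle. The only step needing attention is checking that the lifted switch model built from the five-ary relation with the table $\vartheta U_{\mathsf S_0}+(1-\vartheta)U_{\mathsf S_1}$ is literally the gated lift of \cref{sec:gated-composition}, including that $r$ and $z$ are shared across all constraints. Once that is confirmed, the rest is direct substitution.
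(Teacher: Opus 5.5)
Your proposal is correct and follows the paper's own route: specialize the affine identity of \cref{thm:gated-affine} to the uniform $\NAE_3$ source, then substitute $\NCF(e_H)=0$ when $H$ has no proper two-colouring and $\NCF(e_H)=1/(3\beta(H))$ from \cref{thm:nae-beta} otherwise. Your additional checks are exactly the details the paper's two-line proof leaves implicit: distinct scope variables, the $16$ baseline and $12$ active tuples, the identification $\theta=(1/3,1/3,1/3)$, and satisfiability through the branch $z=r$.
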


\begin{proof}
Combine \cref{thm:gated-affine,thm:nae-beta}.  If $H$ is not two-colourable,
the source packing has no columns and value zero.
\end{proof}

Two-colourability of $3$-uniform hypergraphs, equivalently monotone
$\NAE_3$-satisfiability, is NP-complete by Schaefer's dichotomy
\cite{Schaefer1978}.  Since every colourable nonempty $H$ satisfies
$\NCF(e_H)\ge1/3$, \cref{cor:nae-switch-exact} gives the following stronger
version of the earlier switch bound.

\begin{theorem}[Improved always-satisfiable $\NAE_3$ hardness]
\label{thm:nae-switch-hardness}
For every fixed rational $\vartheta\in(0,1)$, it is NP-hard on an
always-satisfiable fixed-template family to distinguish
\[
\NCF(e)=\vartheta
\qquad\text{from}\qquad
\NCF(e)\ge
\vartheta+\frac{1-\vartheta}{3}.
\]
Therefore additive approximation of either $\NCF$ or $\CF$ within error
strictly smaller than $(1-\vartheta)/6$ is NP-hard.  At
$\vartheta=1/2$, the gap is
\[
\frac12
\qquad\text{versus}\qquad
\NCF(e)\ge\frac23,
\]
and the forbidden additive error is $1/12$.
\end{theorem}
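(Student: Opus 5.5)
The plan is to reduce from two-colourability of nonempty $3$-uniform hypergraphs, which is NP-complete by Schaefer's dichotomy \cite{Schaefer1978}, via the switch construction of \cref{cor:nae-switch-exact}.

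Given $H$, first delete isolated vertices; this changes neither two-colourability nor the lift. If $H$ has no edges, output any fixed two-colourable instance. If some hyperedge repeats a vertex, the scope is not simple, so handle this separately: a hyperedge $(v,v,w)$ or $(v,v,v)$ is simply the equality $v\ne w$ or is unsatisfiable. The simplest route is to start instead from the NP-complete restriction of monotone $\NAE_3$-SAT to clauses with three distinct variables, which is the standard form in Schaefer's theorem. Then build $\widehat H$ with shared gate variables $r,z$ and one switch constraint $\mathsf S(r,z,x,y,w)$ per hyperedge, equipped with the fixed rational table $\vartheta U_{\mathsf S_0}+(1-\vartheta)U_{\mathsf S_1}$. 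This is a fixed template with fixed rational tables, and the construction takes polynomial time. Every instance is satisfiable because $z=r$ satisfies every constraint.

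By \cref{cor:nae-switch-exact}, if $H$ is not two-colourable then $\NCF=\vartheta$. If $H$ is two-colourable then $\NCF=\vartheta+(1-\vartheta)/(3\beta(H))$, and since $\beta(H)\le1$ by \cref{thm:nae-beta}, this is at least $\vartheta+(1-\vartheta)/3$. Any algorithm distinguishing the two promised cases therefore decides two-colourability, which gives the first claim.

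For the approximation claim, suppose an algorithm outputs $\tilde N$ with $|\tilde N-\NCF(e)|<(1-\vartheta)/6$. Threshold $\tilde N$ at the midpoint $\vartheta+(1-\vartheta)/6$. In the NO case $\tilde N$ lies strictly below the midpoint, and in the YES case it lies strictly above, so the algorithm separates the two cases. Since $\CF=1-\NCF$, an additive approximation of $\CF$ with the same error transfers directly. Substituting $\vartheta=1/2$ gives the gap $1/2$ versus $2/3$ and a forbidden error of $1/12$.

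The only delicate point is the simplicity hypothesis of \cref{thm:gated-affine}, which the affine identity needs through its baseline product-law step. I would discharge it by choosing the distinct-variable form of the source problem. If that form were unavailable, I would replace a repeated vertex with fresh copies linked by the equality gadget \eqref{eq:equality-gadget}, which preserves two-colourability by \cref{lem:equality-gadget}.
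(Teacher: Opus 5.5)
Your proposal is correct and follows the paper's argument. You reduce from hypergraph two-colourability via the switch construction, use \cref{cor:nae-switch-exact} together with $\beta(H)\le1$ to get $\vartheta$ versus at least $\vartheta+(1-\vartheta)/3$, threshold at the midpoint, and use $\CF=1-\NCF$ to transfer the additive bound. Your extra care about simple scopes is harmless but not needed: the edges of a $3$-uniform hypergraph already consist of three distinct vertices, and that is how the paper implicitly meets the simplicity hypothesis of \cref{thm:gated-affine}.
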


\begin{proof}
The source NO case has value zero; every source YES case has value at least
$1/3$ by \cref{thm:nae-beta}.  Apply the affine identity
\eqref{eq:gated-affine}.  An additive error smaller than half the value gap
distinguishes the two cases.
\end{proof}

\begin{remark}[Worked separation from MAX-NAE and robust satisfaction]
\label{rem:max-nae-separation}
For every switch instance, setting $z=r$ satisfies every switch constraint, so
the maximum fraction of constraints satisfiable by one assignment is exactly
one, independently of $H$.  The contextual packing can nevertheless be
strictly below one.  For the hypergraph in \cref{ex:ncf-half} and
$\vartheta=1/2$, the exact affine law gives
\[
\NCF(e_{\widehat H}^{1/2})
=\frac12+\frac12\cdot\frac12
=\frac34,
\]
although the MAX-switch value is one.  Thus the objective is not MAX-NAE-SAT,
not robust satisfiability, and not a relabelling of the fraction of satisfied
constraints studied in \cite{BrakensiekGuruswamiSandeep2025,
BrakensiekHuangPotechinZwick2025}.
\end{remark}

\subsection{Near-maximal additive hardness from gated colouring}

Let the source colour sort be $C=[q]$ and let
\[
\mathrm{NEQ}_q:=\{(a,b)\in C^2:a\ne b\}
\]
carry the uniform table $1/[q(q-1)]$.  For a graph $G$, denote the balanced
colouring lift by $e_G^{(q)}$.

\begin{lemma}[Colouring source is zero--one]
\label{lem:colour-zero-one}
For every graph $G$,
\[
\NCF(e_G^{(q)})=
\begin{cases}
1,&G\text{ is }q\text{-colourable},\\
0,&G\text{ is not }q\text{-colourable}.
\end{cases}
\]
\end{lemma}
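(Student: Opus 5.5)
The plan is to apply \cref{cor:one-orbit} with the symmetric group $G=\mathrm{Sym}(q)$ acting on the colour sort $C=[q]$. First I check the hypotheses. The group $\mathrm{Sym}(q)$ preserves $\mathrm{NEQ}_q$, so it lies in $\Aut$ of the colouring template, and it is transitive on $C$. It is also transitive on $\mathrm{NEQ}_q$, since any ordered pair of distinct colours can be sent to any other such pair by a permutation; this needs $q\ge2$, and for $q\le1$ the relation is empty or trivial and I handle it separately. Hence $\Omega_{\mathrm{NEQ}_q}$ is a single orbit with $\theta=1$. The orbit-balanced table \eqref{eq:orbit-tables} is then exactly the uniform table $1/[q(q-1)]$ with uniform singleton marginals $1/q$, so $e_G^{(q)}$ is precisely the orbit-balanced lift of \cref{thm:orbit-quotient}.

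Next I identify homomorphisms. A homomorphism from the instance $G$, with one $\mathrm{NEQ}_q$ constraint per edge (using either orientation, since the relation is symmetric), to the template is exactly a proper $q$-colouring. \Cref{cor:one-orbit} therefore gives $\NCF=1$ when $G$ is $q$-colourable and $\NCF=0$ otherwise.

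For completeness I can also argue directly in the colourable case. Take the uniform distribution over the $\mathrm{Sym}(q)$-orbit of one proper colouring $h$. On each edge, $h$ gives distinct colours, so the pushed-forward law on that edge is uniform on $\mathrm{NEQ}_q$. This meets every capacity exactly, and the total mass is $1$. In the non-colourable case the packing of \cref{prop:hom-packing} has no columns, so its value is $0$.

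The only mild obstacle is conventions. Isolated vertices are deleted, and an edgeless $G$ gives an empty model whose value is $1$, which is consistent with $G$ being colourable. If $G$ has a loop, it admits no proper colouring and the constraint $\mathrm{NEQ}_q(x,x)$ has no homomorphism, so the value $0$ is consistent as well.
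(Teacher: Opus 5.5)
Your proof is correct and follows the paper's argument: the paper proves the colourable case by your direct construction (postcompose one proper colouring with a uniformly random $\sigma\in S_q$, which makes every edge uniform on $\mathrm{NEQ}_q$) and the non-colourable case by noting the packing has no columns. Your primary route through \cref{cor:one-orbit} is the same symmetrization, since that corollary rests on the orbit averaging of \cref{thm:orbit-quotient}.
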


\begin{proof}
If no proper colouring exists, the homomorphism packing has no columns.  If
$c$ is a proper colouring, choose a uniformly random permutation
$\sigma\in S_q$ and use the global colouring $\sigma\circ c$.  Every vertex is
uniform on $[q]$, and every ordered unequal pair on every edge occurs with
probability $1/[q(q-1)]$.  Hence the full empirical model is a marginal of a
global distribution and has $\NCF=1$.
\end{proof}

Apply the gate construction to $\mathrm{NEQ}_q$.  The resulting fixed
many-sorted relation is
\[
\widehat{\mathrm{NEQ}}_q(r,z,a,b)
\quad\Longleftrightarrow\quad
(z=r)
\ \text{or}\
(z\ne r\text{ and }a\ne b).
\]
The baseline table is uniform on
$\{z=r\}\times C^2$ and the active table is uniform on
$\{z\ne r\}\times\mathrm{NEQ}_q$.

\begin{theorem}[Near-maximal additive hardness]
\label{thm:near-half-hardness}
Fix $q\ge3$ for which $q$-colourability is NP-complete, and fix a rational
$\vartheta\in(0,1)$.  On an always-satisfiable family over the fixed gated
colouring template, it is NP-hard to distinguish
\[
\NCF(e)=\vartheta
\qquad\text{from}\qquad
\NCF(e)=1.
\]
Consequently, additive approximation of $\NCF$ or $\CF$ within any error
strictly smaller than
\[
\frac{1-\vartheta}{2}
\]
is NP-hard.
\end{theorem}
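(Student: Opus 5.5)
The plan is to reduce from $q$-colourability through the exact gated composition, with \cref{lem:colour-zero-one} as the source and \cref{thm:gated-affine} as the transfer. Given a graph $G$, delete isolated vertices; these do not affect $q$-colourability. If $G$ is then edgeless, output a fixed trivial YES instance (for example a single edge). Otherwise regard $G$ as a source instance over the colour sort: choose an orientation of every edge $uv$ and impose the constraint $\mathrm{NEQ}_q(u,v)$. Every scope consists of two distinct vertices, and every remaining vertex occurs in a constraint, so the instance is simple and nonempty as \cref{thm:gated-affine} requires. The source lift is exactly the balanced colouring lift $e_G^{(q)}$, because $\mathrm{NEQ}_q$ with the uniform table has uniform singleton marginals $1/q$. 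Apply the gate construction with shared gate variables $r,z$ and the fixed rational $\vartheta$ to obtain $e=e_{\widehat G}^{\,\vartheta}$. All tables are rational with denominators depending only on $q$ and $\vartheta$, and the size is linear in $|E(G)|$, so the reduction is polynomial.

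Next I combine \cref{thm:gated-affine} with \cref{lem:colour-zero-one}:
\[
\NCF(e)=\vartheta+(1-\vartheta)\NCF(e_G^{(q)}),
\]
which equals $1$ if $G$ is $q$-colourable and equals $\vartheta$ otherwise. \Cref{thm:gated-affine} also guarantees that every gated instance has supported global assignments, since setting $z=r$ satisfies everything; hence the family is always satisfiable. Every instance uses the single fixed many-sorted relation $\widehat{\mathrm{NEQ}}_q$ with the fixed table $\widehat\mu^{\,\vartheta}$. NP-hardness of $q$-colourability for the fixed $q$ then gives NP-hardness of distinguishing $\NCF=\vartheta$ from $\NCF=1$.

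For the approximation statement, suppose an algorithm returns $\tilde v$ with $|\tilde v-\NCF(e)|<\varepsilon$, where $\varepsilon\le(1-\vartheta)/2$. Thresholding $\tilde v$ at the midpoint $(1+\vartheta)/2$ separates the two cases, because the two intervals $(\vartheta-\varepsilon,\vartheta+\varepsilon)$ and $(1-\varepsilon,1+\varepsilon)$ are disjoint exactly when $2\varepsilon\le1-\vartheta$. The same argument applies to $\CF=1-\NCF$.

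The main obstacle I anticipate is bookkeeping rather than mathematics. One point is ensuring the hypotheses of \cref{thm:gated-affine} (simplicity, nonemptiness, no isolated variables) hold after preprocessing. The other is confirming that the sorts are respected: the gate sort is two-element and the colour sort is $[q]$, as in the many-sorted balanced realization. Everything else is a direct substitution of \cref{lem:colour-zero-one} into the affine identity.
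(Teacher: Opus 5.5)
Your proposal is correct and follows the paper's proof: you substitute the zero--one colouring source of \cref{lem:colour-zero-one} into the affine identity of \cref{thm:gated-affine}, note that every instance is satisfiable through the baseline branch $z=r$, and reduce from $q$-colourability. Your preprocessing (removing isolated vertices, handling the edgeless case, checking simplicity) and your midpoint-threshold argument for the additive-error claim spell out details the paper leaves implicit.
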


\begin{proof}
By \cref{lem:colour-zero-one,thm:gated-affine}, a non-$q$-colourable graph
gives value $\vartheta$, while a $q$-colourable graph gives
$\vartheta+(1-\vartheta)=1$.  Every gated instance is satisfiable through the
baseline branch.  For $q=3$, graph colouring is NP-complete
\cite{Karp1972,GareyJohnson1979}.
\end{proof}

\begin{corollary}[Optimal ceiling approached]
\label{cor:additive-ceiling}
For every $\eps\in(0,1/2)$, there exists a fixed finite balanced template for
which additive approximation within $1/2-\eps$ is NP-hard on an
always-satisfiable family.
\end{corollary}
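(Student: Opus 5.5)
The plan is to instantiate \cref{thm:near-half-hardness} with a gate parameter chosen small enough. Given $\eps\in(0,1/2)$, pick a rational $\vartheta\in(0,1)$ with $\vartheta\le 2\eps$; for definiteness take $\vartheta:=\eps$, which is rational whenever $\eps$ is, and otherwise any rational in $(0,2\eps)$. Fix $q=3$, for which $q$-colourability is NP-complete \cite{Karp1972,GareyJohnson1979}. The template is the gated colouring template $\widehat{\mathrm{NEQ}}_3$ with the balanced table $\widehat\mu^{\,\vartheta}$ from \eqref{eq:gated-table}. It is a fixed finite two-sorted template (gate sort of size two, colour sort of size three), and its probability tables are rational and depend only on $\eps$.

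Next I would check the requirements in the statement. First, balancedness: both branch tables have uniform gate marginals and uniform colour marginals, as recorded right after \eqref{eq:gated-table}, so the convex combination is balanced. Second, always-satisfiability: every gated instance is satisfiable through the baseline branch $z=r$, and \cref{thm:gated-affine} gives $\NCF\ge\vartheta>0$. Third, simplicity of source instances: a loopless simple graph $G$ yields scopes of distinct variables, and isolated vertices can be deleted without affecting $3$-colourability.

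Then I apply \cref{thm:near-half-hardness}. Distinguishing $\NCF=\vartheta$ from $\NCF=1$ is NP-hard, so additive approximation within any error strictly smaller than $(1-\vartheta)/2$ is NP-hard. With $\vartheta\le 2\eps$ we get $(1-\vartheta)/2\ge 1/2-\eps$. If $\vartheta<2\eps$ strictly, then $1/2-\eps<(1-\vartheta)/2$, so error $1/2-\eps$ is itself strictly below the threshold and hardness follows directly. This is why I choose $\vartheta$ strictly less than $2\eps$, for instance $\vartheta=\eps$. The $\CF$ statement follows from $\CF=1-\NCF$, since additive errors transfer unchanged.

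The main (mild) obstacle is the strict-versus-nonstrict boundary: the statement says ``within $1/2-\eps$'', while the theorem only forbids errors strictly below $(1-\vartheta)/2$. Choosing $\vartheta$ strictly below $2\eps$ handles this. A secondary point is that the template must be a fixed finite object independent of the input, which holds because $\vartheta$ depends only on $\eps$.
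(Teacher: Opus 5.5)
Your proposal is correct and is essentially the paper's proof: choose a fixed positive rational $\vartheta<2\eps$ (your $\vartheta=\eps$ works when $\eps$ is rational), apply \cref{thm:near-half-hardness}, and use the strict inequality to get $(1-\vartheta)/2>1/2-\eps$. Your extra checks are harmless but not needed for this corollary, since balancedness, satisfiability through $z=r$, and simplicity of source scopes are already handled by that theorem and by the gated construction.
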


\begin{proof}
Choose a fixed positive rational $\vartheta<2\eps$ and apply
\cref{thm:near-half-hardness}; then
$(1-\vartheta)/2>1/2-\eps$.
\end{proof}

\begin{proposition}[Explicit accuracy-to-template encoding]
\label{prop:epsilon-encoding}
For $q=3$ and $\eps\in(0,1/2)$, put
\[
M:=\left\lfloor\frac1{2\eps}\right\rfloor+1,
\qquad \vartheta:=\frac1M.
\]
Then $\vartheta<2\eps$, the gated relation has fixed arity four and fixed
support size thirty, and its probability table is described by the two atomic
weights
\[
\frac{1}{18M}
\quad\text{on each baseline tuple},
\qquad
\frac{M-1}{12M}
\quad\text{on each active tuple}.
\]
Thus the relational support and domain sizes are independent of $\eps$, while
the complete rational table requires $O(\log(1/\eps))$ bits.
\end{proposition}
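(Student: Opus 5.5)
The proposition consists of routine verifications, so the plan is to check each claim directly from the definitions in \eqref{eq:gate-baseline-table}--\eqref{eq:gated-table} specialised to $q=3$.

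\textbf{Choice of $\vartheta$.} Put $M=\lfloor 1/(2\eps)\rfloor+1$. Since $\lfloor y\rfloor+1>y$, we get $M>1/(2\eps)$, hence $\vartheta=1/M<2\eps$. Because $\eps<1/2$, we have $1/(2\eps)>1$, so $M\ge2$ and $\vartheta\in(0,1)$ is a positive rational. \Cref{cor:additive-ceiling} and \cref{thm:near-half-hardness} then apply to this $\vartheta$ and give hardness within $1/2-\eps$.

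\textbf{Support count.} The gated relation $\widehat{\mathrm{NEQ}}_3$ has arity $2+2=4$. Its baseline branch $\{(a,a,b_1,b_2)\}$ has $2\cdot 3^2=18$ tuples. Its active branch $\{(a,1-a,b_1,b_2):b_1\ne b_2\}$ has $2\cdot 6=12$ tuples. The two branches are disjoint, since one has $z=r$ and the other $z\ne r$, so the support has size $30$.

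\textbf{Atomic weights.} On a baseline tuple, \eqref{eq:gate-baseline-table} gives $\nu_0=\tfrac12\cdot\tfrac19=\tfrac1{18}$. Multiplying by $\vartheta=1/M$ gives $1/(18M)$. On an active tuple, $\nu_1=\tfrac12\cdot\tfrac16=\tfrac1{12}$. Multiplying by $1-\vartheta=(M-1)/M$ gives $(M-1)/(12M)$. As a sanity check, $18\cdot\tfrac1{18M}+12\cdot\tfrac{M-1}{12M}=1$.

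\textbf{Encoding length.} The domain sizes (two for the gate sort, three for the colour sort), the arity, and the support size do not depend on $\eps$. The only $\eps$-dependent data are the integers $M$ and $M-1$ in the two weights. Since $M\le 1/(2\eps)+1$, each has $O(\log(1/\eps))$ bits, so the full rational table has $O(\log(1/\eps))$ bits.

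There is no real obstacle here. The only point needing care is confirming $M\ge2$, so that $\vartheta<1$ and \cref{thm:near-half-hardness} applies, and that $\vartheta<2\eps$ holds with strict inequality. Both follow from $\lfloor y\rfloor+1>y$ together with $\eps<1/2$.
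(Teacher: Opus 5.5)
Your proposal is correct and follows the paper's proof essentially line for line: $M>(2\eps)^{-1}$ gives $\vartheta<2\eps$, the counts $2q^2=18$ and $2q(q-1)=12$ weighted by $\vartheta$ and $1-\vartheta$ give the atomic weights, and $M=O(1/\eps)$ gives the bit bound. Your extra checks are harmless additions that the paper leaves implicit: $M\ge2$ so that $\vartheta\in(0,1)$, disjointness of the two branches so that each tuple receives exactly one weight, and the total mass summing to one.
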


\begin{proof}
The choice of $M$ gives $M>(2\eps)^{-1}$ and hence $1/M<2\eps$.
For $q=3$, the baseline branch $z=r$ contains $2q^2=18$ tuples and the active
branch $z\ne r$, $a\ne b$ contains $2q(q-1)=12$ tuples.  Multiplying their
uniform laws by $\vartheta$ and $1-\vartheta$ gives the displayed atomic
weights.  The numerator and denominator bit lengths are $O(\log M)$, and
$M=O(1/\eps)$.
\end{proof}

\begin{remark}[Sharpness and scope]
No $[0,1]$-valued invariant can have an additive inapproximability threshold
strictly above $1/2$, because the constant estimate $1/2$ always has error at
most $1/2$.  Thus \cref{cor:additive-ceiling} is optimal up to an arbitrarily
small constant.  The template depends on $\eps$ through the fixed rational
branch weight.  Obtaining the full ceiling with one positive-baseline fixed
template is impossible by this two-branch gap alone and remains a separate
amplification question.
\end{remark}

\section{Bounded treewidth and extension-complexity barriers}
\label{sec:polyhedral-boundary}

The preceding hardness results concern unrestricted instances.  We now locate
a precise structural tractability boundary and then show that it cannot be
replaced by a width-independent compact LP.

\subsection{Exact formulation on bounded primal treewidth}

Let the primal graph of $\X$ connect two variables when they occur in a common
constraint.  Let $(T,(B_t)_{t\in V(T)})$ be a tree decomposition of width $t$,
and assign every constraint to one bag containing its scope.  Put
$b:=\max_s|B_s|$.

\begin{theorem}[Treewidth-exact extended formulation]
\label{thm:treewidth}
For every fixed finite balanced template, $\NCF(e_{\X})$ is computable by an
exact LP with
\[
O\bigl(|T|b^{t+1}\bigr)
\]
variables and constraints, up to a template-dependent constant.  Given the
decomposition, the running time is
\[
b^{O(t)}\operatorname{poly}(|\X|).
\]
\end{theorem}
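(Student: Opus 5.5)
The plan is to start from the homomorphism-packing LP of \cref{prop:hom-packing} and replace the exponentially many homomorphism columns by local marginals on bags.

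\textbf{Formulation.} First make the decomposition nice, or at least ensure $|T|$ is linear in $|\X|$. For every node $t$, introduce variables $\lambda_t(\sigma)\ge0$, one for each sort-respecting partial assignment $\sigma$ of $B_t$ that satisfies every constraint assigned to $t$. Add one scalar $W$. The constraints are as follows.
\begin{enumerate}[label=(\roman*)]
\item \emph{Bag totals:} $\sum_\sigma\lambda_t(\sigma)=W$ at every node.
\item \emph{Edge consistency:} for every tree edge $tt'$, the marginals of $\lambda_t$ and $\lambda_{t'}$ on $B_t\cap B_{t'}$ agree.
\item \emph{Capacities:} for each constraint $c=(R,\mathbf x)$ assigned to $t$ and each $\mathbf b\in R^{\B}$, the $\lambda_t$-mass of assignments with $\sigma(\mathbf x)=\mathbf b$ is at most $\mu_R(\mathbf b)$.
\end{enumerate}
The objective is to maximize $W$. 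Each bag has at most $b^{t+1}$ assignments. The capacity rows per node are bounded by the number of assigned constraints times a template constant, and consistency rows number at most $b^{t+1}$ per edge. This gives the stated size, and the running time follows from polynomial-time LP solving, since all coefficients are $0$--$1$ and the right-hand side is rational of template-bounded size.

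\textbf{Soundness.} Any feasible packing $(w_h)$ projects to such a solution: set $\lambda_t(\sigma)=\sum_{h|_{B_t}=\sigma}w_h$. Every $h$ is a homomorphism, so only locally satisfying $\sigma$ receive mass. Totals and consistency are automatic, and the capacities are exactly \eqref{eq:hom-pack-capacity}. Hence the LP value is at least $\NCF(e_{\X})$.

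\textbf{Completeness (main step).} Conversely, from a feasible $(\lambda_t,W)$ with $W>0$, I would build a global measure by the junction-tree gluing. Root $T$ and normalize $p_t=\lambda_t/W$. Sample the root bag from $p_{\mathrm{root}}$, then at each child sample the new variables conditionally on the separator according to $p_{t'}(\cdot\mid\sigma|_{B_t\cap B_{t'}})$. This is well-defined because of the consistency equalities; separator values of positive mass are always the ones reached. The running-intersection property ensures that every variable is assigned once and consistently.

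By induction down the tree, the resulting distribution has bag marginal exactly $p_t$ at every node. Every sampled assignment satisfies all constraints, because each constraint is assigned to a bag whose support consists of satisfying $\sigma$. Thus the samples are homomorphisms, and scaling by $W$ gives a packing of mass $W$ whose loads equal the bag loads, which are within capacity. Therefore the LP value equals $\NCF(e_{\X})$.

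I expect this induction to be the delicate point: marginals agreeing on a tree must glue, which fails on cycles. The argument has to use that $T$ is a tree and that each variable's bags form a connected subtree. With that, the inductive step is one conditional-independence computation at each tree edge. The redundancy of singleton and consistency-context rows from \cref{prop:hom-packing} means no further constraints are needed.
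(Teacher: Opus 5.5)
Your proposal is correct and follows the paper's proof essentially step for step. It uses variables on locally satisfying bag assignments, separator-marginal consistency along tree edges, and capacity rows only for constraints assigned to a bag, with singleton rows redundant by \cref{prop:hom-packing}. One inequality comes from projecting a homomorphism packing, and the other from root-to-leaf conditional gluing via the running-intersection property. The only differences are cosmetic: you impose the total $W$ at every bag rather than only at the root, which is redundant given the consistency equalities including empty separators, and you prune the decomposition so that $|T|$ is linear in $|\X|$ for the running-time claim.
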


\begin{proof}
For every bag $B_t$ and every locally satisfying assignment
$s:B_t\to\bigcup_s B_s$, introduce a nonnegative variable $\eta_{t,s}$.  For
adjacent bags $t,u$, impose equality of their separator marginals:
\[
\sum_{s:s|_{B_t\cap B_u}=a}\eta_{t,s}
=
\sum_{s':s'|_{B_t\cap B_u}=a}\eta_{u,s'}.
\]
At a root bag impose $\sum_s\eta_{r,s}=W$ and maximize $W$.  If a constraint
$c=(R,\mathbf x)$ is assigned to bag $t$, impose for every
$\mathbf b\in R^{\B}$
\[
\sum_{s:s(\mathbf x)=\mathbf b}\eta_{t,s}
\le\mu_R(\mathbf b).
\]
Singleton capacities are redundant by the marginal argument in
\cref{prop:hom-packing}.

Every global homomorphism packing induces feasible bag submarginals.  Conversely,
consistent nonnegative bag tables of common total mass glue on a tree:
normalize when $W>0$, sample the root bag, and recursively sample each child
conditioned on its separator.  The running-intersection property gives a
global assignment; local satisfaction ensures that it is a homomorphism.
Rescaling by $W$ yields a global packing with the prescribed bag marginals.
Thus the formulation is exact.  The state count is
$O(b^{t+1})$ per bag.  This is the capacity-packing refinement of the known
bounded-treewidth CSP extended-formulation method
\cite{KolmanKoutecky2015}.
\end{proof}

\begin{corollary}[Gating preserves bounded width]
\label{cor:gating-treewidth}
If $\widehat\X$ is obtained from $\X$ by the shared two-variable gate, then
\[
\tw(\X)\le\tw(\widehat\X)\le\tw(\X)+2.
\]
Hence the exact gated packing remains fixed-parameter tractable in source
primal treewidth.
\end{corollary}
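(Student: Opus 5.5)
We prove the two treewidth inequalities separately and then combine them with \cref{thm:treewidth}, whose proof only requires some tree decomposition of the primal graph of $\widehat\X$ in which every gated scope lies in a bag.

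\textbf{Lower bound.} The primal graph of $\X$ is an induced subgraph of the primal graph of $\widehat\X$. Every source constraint $c$ becomes $\widehat c$, whose scope is $\{r,z\}\cup\operatorname{scope}(c)$. Every edge between source variables in $\X$ therefore remains an edge in $\widehat\X$. Conversely, any two source variables adjacent in $\widehat\X$ share a gated scope, hence share the corresponding source scope. Deleting the vertices $r,z$ from the primal graph of $\widehat\X$ thus recovers exactly the primal graph of $\X$; simplicity ensures that no variables are lost. Treewidth is monotone under taking subgraphs, which gives $\tw(\X)\le\tw(\widehat\X)$.

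\textbf{Upper bound.} Take an optimal tree decomposition $(T,(B_t))$ of the primal graph of $\X$ and replace each bag $B_t$ by $B_t\cup\{r,z\}$.
\begin{itemize}
\item Each new edge of $\widehat\X$ is incident to $r$ or $z$, so it is covered in every bag.
\item Every old edge is still covered, because bags only grow.
\item Running intersection continues to hold: the occurrences of $r$ and $z$ span all of $T$, and the occurrences of every source variable are unchanged.
\end{itemize}
Every bag grows by exactly two, so the width is at most $\tw(\X)+2$. Each gated scope $\{r,z\}\cup\operatorname{scope}(c)$ lies in an augmented bag, because the source scope is a clique in the primal graph of $\X$ and is therefore contained in some original bag.

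\textbf{Fixed-parameter tractability.} We apply \cref{thm:treewidth} to the gated template. Its domains have size at most $\max(b,2)$, and we use the augmented decomposition of width at most $\tw(\X)+2$. The running time is $\max(b,2)^{O(\tw(\X)+2)}\operatorname{poly}(|\widehat\X|)$, and $|\widehat\X|=O(|\X|)$. A decomposition of $\X$ can be computed or approximated in FPT time, for example by Bodlaender's algorithm, so this is fixed-parameter tractable in $\tw(\X)$. Alternatively, we can run \cref{thm:treewidth} on $\X$ itself and apply the affine identity \cref{thm:gated-affine}, which computes $\NCF(e^{\,\vartheta}_{\widehat\X})$ directly.

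The only point needing care is the bookkeeping that the primal graph with the gate vertices deleted is exactly the source primal graph; it relies on simplicity and on the absence of isolated variables. I expect no real obstacle here.
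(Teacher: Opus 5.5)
Your proof is correct and takes the same route as the paper: the source primal graph is the subgraph induced on the non-gate variables, and adjoining $r,z$ to every bag of a source tree decomposition gives width at most $\tw(\X)+2$. Your scope-containment and FPT discussion fills in details the paper leaves implicit. One wording fix: an edge $\{r,x\}$ with $x$ a source variable is covered by every augmented bag containing $x$, not by every bag, and this suffices because each source variable lies in some bag.
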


\begin{proof}
The source primal graph is an induced subgraph of the gated primal graph.  For
the upper bound, add the two shared gate variables to every bag of a tree
decomposition of the source graph.
\end{proof}

\subsection{Cut-polytope projection}

For $n\ge2$, let $U=[n]$.  Introduce an auxiliary vertex $z_{ij}$ for every
pair $1\le i<j\le n$, and define the $3$-uniform hypergraph
\[
H_n:
\qquad
E(H_n)=\{\{i,j,z_{ij}\}:1\le i<j\le n\}.
\]
Let $\Fcut_{H_n}$ be the proper-cut face from \eqref{eq:proper-face}.

\begin{theorem}[Cut projection and extension lower bound]
\label{thm:cut-xc}
Projection of $\Fcut_{H_n}$ onto the coordinates
$\{x_{ij}:i,j\in U\}$ is exactly $\CUT(K_n)$.  Consequently,
\[
\xc(\Fcut_{H_n})
\ge\xc(\CUT(K_n))
=2^{\Omega(n)}.
\]
Since $H_n$ has $N=\Theta(n^2)$ vertices, the lower bound is
$2^{\Omega(\sqrt N)}$ in $N$.
\end{theorem}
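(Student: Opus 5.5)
By \cref{thm:cut-radial}, $\Fcut_{H_n}$ is the convex hull of the cut vectors $\delta_S$ of proper two-colourings $S$ of $H_n$. The plan is therefore to prove the projection identity at the level of vertices, and then to invoke monotonicity of extension complexity under linear projection together with a known lower bound for $\CUT(K_n)$.

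\textbf{Projection into $\CUT(K_n)$.} Take a proper colouring $S\subseteq V(H_n)$. Its cut vector, restricted to the pairs inside $U$, is $\delta_{S\cap U}$, which is a vertex of $\CUT(K_n)$. Projection is linear and preserves convex hulls, so the projection of $\Fcut_{H_n}$ lies in $\CUT(K_n)$.

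\textbf{Every cut of $K_n$ is reached.} Fix any $T\subseteq U$; we extend it to a proper colouring of $H_n$. Each hyperedge $\{i,j,z_{ij}\}$ contains the private vertex $z_{ij}$, and no other hyperedge contains it. If $i$ and $j$ lie on the same side of $T$, give $z_{ij}$ the opposite colour. If they lie on different sides, the edge is already nonmonochromatic, so $z_{ij}$ may take either colour. The resulting $S$ is proper and satisfies $S\cap U=T$. Thus every vertex $\delta_T$ of $\CUT(K_n)$ is the image of a vertex of $\Fcut_{H_n}$, and the projection is exactly $\CUT(K_n)$.

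\textbf{Extension complexity.} An extended formulation of $\Fcut_{H_n}$, composed with the coordinate projection, is an extended formulation of $\CUT(K_n)$ of the same size. Hence $\xc(\Fcut_{H_n})\ge\xc(\CUT(K_n))$. For the lower bound $\xc(\CUT(K_n))=2^{\Omega(n)}$ I would cite Fiorini--Massar--Pokutta--Tiwary--de~Wolf, whose correlation-polytope bound transfers to $\CUT(K_{n+1})$ through the covariance map. No citation for this appears earlier in the paper, so a reference to that work must be added; the step itself is standard. Finally, $N=|V(H_n)|=n+\binom n2=\Theta(n^2)$, so $n=\Theta(\sqrt N)$ and the bound becomes $2^{\Omega(\sqrt N)}$.

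The main obstacle is only bookkeeping: making sure the face description in \cref{thm:cut-radial} applies to $H_n$, whose hyperedges are unordered triples. This is harmless, because the defining equation $x_{ab}+x_{ac}+x_{bc}=2$ is symmetric in $a,b,c$.
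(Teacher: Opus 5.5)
Your proposal is correct and follows the paper's proof essentially step for step. Both arguments show that the projection lies in $\CUT(K_n)$ at the level of vertices, lift each cut of $U$ by colouring the private vertex $z_{ij}$ opposite to a monochromatic pair, and then combine monotonicity of extension complexity under projection with the Fiorini--Massar--Pokutta--Tiwary--de~Wolf bound (your explicit count $N=n+\binom n2$ and the covariance-map remark are details the paper leaves implicit).
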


\begin{proof}
The projection of any cut on the enlarged vertex set is a cut on $U$, so it
lies in $\CUT(K_n)$.  Conversely, fix a cut of $U$.  For each pair $i,j$, if
$i$ and $j$ have the same colour, colour $z_{ij}$ oppositely; if they differ,
choose either colour.  Every triple $\{i,j,z_{ij}\}$ is then
nonmonochromatic.  Thus every vertex of $\CUT(K_n)$ lifts to a vertex of
$\Fcut_{H_n}$, and convexity gives equality of the projection.

Extension complexity cannot increase under projection from the source
polytope.  The exponential lower bound for the complete cut polytope is due
to Fiorini, Massar, Pokutta, Tiwary, and de Wolf
\cite{FioriniEtAl2015}.
\end{proof}

The primal graph of $H_n$ contains $K_n$ on the original vertex set $U$, so
$\tw(H_n)\ge n-1$.  Therefore \cref{thm:treewidth,thm:cut-xc} are consistent:
the lower-bound family has unbounded width.

\subsection{The obstruction persists for the same gated \texorpdfstring{$\NAE_3$}{NAE3} template}

Let $\Mpoly_{\mathrm{sw}}(H)$ be the homomorphism marginal polytope of the
$\NAE_3$ switch instance built from $H$.  Include singleton coordinates for
the two shared gate variables.

\begin{theorem}[Active-face inheritance]
\label{thm:gated-xc}
The face of $\Mpoly_{\mathrm{sw}}(H_n)$ defined by
\[
r=0,
\qquad z=1
\]
projects onto $\Fcut_{H_n}$.  Hence
\[
\xc\bigl(\Mpoly_{\mathrm{sw}}(H_n)\bigr)
\ge2^{\Omega(n)}
=2^{\Omega(\sqrt N)}.
\]
Thus the fixed gated $\NAE_3$ template used in
\cref{thm:nae-switch-hardness} also has no width-independent
polynomial-size exact marginal formulation.
\end{theorem}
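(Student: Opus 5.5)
We would prove this by exhibiting an explicit affine projection from the face onto $\Fcut_{H_n}$, and then invoking the two standard facts about extension complexity: it does not increase under taking faces, and it does not increase under linear projections. The lower bound then follows from \cref{thm:cut-xc}.

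\textbf{Step 1: the face.} The gate singleton coordinates satisfy $0\le M(h)_{r=0}\le1$ on every vertex. So the face $F$ is cut out by the valid inequalities $M_{r=0}\le1$ and $M_{z=1}\le1$, and it is the convex hull of $M(h)$ over homomorphisms $h$ with $h(r)=0$ and $h(z)=1$. Here we use that a face of a polytope defined by valid inequalities is the hull of the vertices lying on it. These $h$ are exactly the active-branch switch assignments, because $z\ne r$ forces $\NAE_3$ on every scope. Their source-variable restrictions range over all proper two-colourings $S$ of $H_n$, so $F=\operatorname{conv}\{M(0,1,S)\}$.

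\textbf{Step 2: the projection.} For an edge $\{i,j,z_{ij}\}$, the relation-event coordinates of $M(0,1,S)$ record the indicator of the tuple $(0,1,S(i),S(j),S(z_{ij}))$. Define the linear map sending the three pair coordinates $x_{ij}$, $x_{i z_{ij}}$, $x_{j z_{ij}}$ to the sum of the event indicators over tuples in which the corresponding two coordinates differ. On vertices this returns the cut vector $\delta_S$ restricted to pairs inside hyperedges. These are exactly the coordinates on which $\Fcut_{H_n}$ is defined, or at least the coordinates that matter for the projection onto $\{x_{ij}\}$ used in \cref{thm:cut-xc}. By linearity, the image of $F$ is $\operatorname{conv}\{\delta_S: S\text{ proper}\}$, which equals $\Fcut_{H_n}$ by \cref{thm:cut-radial} (restricted to the relevant coordinates).

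\textbf{Step 3: conclusion.} Compose this projection with the projection onto $\{x_{ij}: i,j\in U\}$. By the proof of \cref{thm:cut-xc} the composite image is $\CUT(K_n)$, so $\xc(\Mpoly_{\mathrm{sw}}(H_n))\ge\xc(F)\ge\xc(\CUT(K_n))=2^{\Omega(n)}$. Since the switch instance has $N=\Theta(n^2)$ variables, this is $2^{\Omega(\sqrt N)}$.

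\textbf{Expected obstacle.} The main technical point is a coordinate-bookkeeping issue. $\Fcut_{H_n}$ lives in the full coordinate space of $\CUT(V)$, including pairs $x_{uv}$ that lie in no common hyperedge, and those cannot be read linearly off hyperedge-local marginals. We would handle this by working with the projection of $\Fcut_{H_n}$ onto the within-edge pair coordinates, or directly onto $\{x_{ij}\}$. This suffices for the inequality: every pair $i,j\in U$ lies in the hyperedge $\{i,j,z_{ij}\}$, so the $x_{ij}$ coordinates are recoverable, and \cref{thm:cut-xc} only needs the projection onto those coordinates. The claim ``projects onto $\Fcut_{H_n}$'' should then be read in this coordinate-restricted sense.
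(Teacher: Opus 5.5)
Your proposal is correct and is essentially the paper's argument: restrict to the active face $r=0$, $z=1$, identify its vertices with proper two-colourings of $H_n$, read pair disagreements linearly off the relation-event coordinates, and combine \cref{thm:cut-xc} with the facts that extension complexity does not increase under taking faces or projections. Your coordinate caveat is justified and is a point the paper's proof passes over. Pairs such as $z_{12},z_{34}$ lie in no common hyperedge, so their disagreement is not a linear function of the edge marginals, and the face, which lives in an $O(n^2)$-dimensional space, cannot map onto $\Fcut_{H_n}$ in the full $\CUT(V)$ coordinates; your restricted reading (within-edge coordinates, or just $x_{ij}$ with $i,j\in U$) is exactly what the lower bound needs.
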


\begin{proof}
On the face $r=0,z=1$, every switch constraint is in its active branch and
reduces to the corresponding $\NAE_3$ relation.  Its global homomorphisms are
therefore exactly the proper two-colourings of $H_n$, with the gate coordinates
fixed.  Projecting their event vectors to pair-disagreement coordinates gives
$\Fcut_{H_n}$ by \cref{thm:cut-radial}.  A projection of a face cannot have
larger extension complexity than the original polytope, so
\cref{thm:cut-xc} gives the bound.
\end{proof}

\begin{remark}[What extension complexity does not prove]
The theorem rules out a universal compact exact LP for the full homomorphism
marginal polytope.  It does not by itself prove hardness of every scalar
objective over that polytope.  Scalar inapproximability is supplied separately
by \cref{thm:nae-switch-hardness,thm:near-half-hardness}.  The significance is
that the algorithmic and geometric obstructions now apply to the same gated
$\NAE_3$ framework rather than to unrelated templates.
\end{remark}

\section{Scope, remaining frontier, and conclusion}
\label{sec:scope}

The paper closes the main objections to an endpoint-only formulation.
Quantitative hardness is proved where all instances have supported global
assignments; the switch objective is separated explicitly from MAX-NAE and
robust satisfaction; the NAE examples are fully enumerated; the bounded-width
and extension-complexity results are reconciled; and the latter is transferred
to the same gated template used for hardness.

The results nevertheless have sharp limits.

\begin{openproblem}[Constant-gap pure $\NAE_3$ hardness]
Does there exist a universal $\delta>0$ for which it is NP-hard, on simple
bounded-degree $3$-uniform hypergraphs supplied with a proper two-colouring, to
distinguish
\[
\NCF(e_H)=1
\qquad\text{from}\qquad
\NCF(e_H)\le1-\delta?
\]
\Cref{thm:pure-nae-npcomplete} settles exact gate-free hardness, while
\cref{thm:conditional-pure-gap} identifies a sufficient normalized fractional-colouring
gap; the unconditional constant-gap strengthening remains open.
\end{openproblem}

\begin{openproblem}[Algebraic classification of orbit-capacity packings]
Characterize fixed templates for which the quotient LP in
\cref{thm:orbit-quotient} is solvable by a uniform polynomial-time LP, SDP,
affine, or other algebraic relaxation.  Existing polymorphism, fractional
polymorphism, and valued-minion theories
\cite{BartoBulinKrokhinOprsal2021,ThapperZivny2016,
BartoEtAlValuedMinions2024} do not immediately classify the radial
local-capacity gauge studied here.
\end{openproblem}

\begin{openproblem}[Single-template ceiling amplification]
Can one fixed positive-baseline template yield NP-hard additive approximation
for every constant error below $1/2$, rather than using a template whose fixed
branch weight depends on the requested error?  Any such theorem must amplify
the value gap without allowing the baseline mass to vanish and thereby
reintroducing support hardness.
\end{openproblem}

\begin{openproblem}[Approximation of the minority game]
Develop approximation algorithms and hardness thresholds for $\beta(H)$.  The
cut-face formulation suggests semidefinite relaxations, but the objective is a
worst local minority-position probability, not total cut weight or the
fraction of satisfied clauses.
\end{openproblem}

The main conceptual outcome is an exact separation of mechanisms.  Full
relation transitivity gives a zero--one endpoint.  Multiple relation orbits
leave an exact capacity quotient.  Gating composes that quantitative value
affinely with a positive always-satisfiable branch.  Bounded treewidth permits
exact compact formulations, while unbounded width supports cut-polytope
extension barriers.  The exact gate-free problem is now settled by
\cref{thm:pure-nae-npcomplete}.  The next decisive step is quantitative: a
constant-gap tractability--hardness theory for the pure minority game that
requires neither a switch relation nor a vanishing-precision decision.

\section*{Code availability}
No empirical datasets were generated or analysed. The code is used only as a reproducibility check and not as a substitute for any proof.

\appendix

\section{Complete verification of the two five-vertex examples}
\label{app:enumeration}

For a colouring representative and an ordered hyperedge, the table entry is
the minority position.  Representatives are chosen with vertex $1$ coloured
zero.

\begin{table}[ht]
\centering
\caption{All colouring classes for $E_{2/3}$ and their minority positions.}
\label{tab:h23}
\renewcommand{\arraystretch}{1.12}
\begin{tabular}{@{}c|cccccc|c@{}}
\toprule
Colouring & $123$ & $124$ & $125$ & $134$ & $135$ & $145$ & primal weight\\
\midrule
$00111$ & 3&3&3&1&1&1 & $1/6$\\
$01011$ & 2&1&1&3&3&1 & $1/6$\\
$01101$ & 1&2&1&2&1&3 & $1/6$\\
$01110$ & 1&1&2&1&2&2 & $1/6$\\
$01111$ & 1&1&1&1&1&1 & $0$\\
\midrule
Positive-row counts
& $(2,1,1)$ & $(2,1,1)$ & $(2,1,1)$
& $(2,1,1)$ & $(2,1,1)$ & $(2,1,1)$ & $2/3$\\
\bottomrule
\end{tabular}
\end{table}

The count triple $(n_1,n_2,n_3)$ records how many positive-weight rows use
each minority position.  Since every positive row has weight $1/6$ and every
count is at most two, all capacities $1/3$ are respected.  The dual charges
$y_{123,1}=y_{145,1}=1$ cover all five rows: every row has minority position
one in at least one of those two edges.  The dual objective is
$(1/3)(1+1)=2/3$.

\begin{table}[ht]
\centering
\caption{All colouring classes for $E_{1/2}=E_{2/3}\cup\{234\}$.}
\label{tab:h12}
\renewcommand{\arraystretch}{1.12}
\begin{tabular}{@{}c|ccccccc|c@{}}
\toprule
Colouring & $123$ & $124$ & $125$ & $134$ & $135$ & $145$ & $234$ & weight\\
\midrule
$00111$ & 3&3&3&1&1&1&1 & $1/6$\\
$01011$ & 2&1&1&3&3&1&2 & $1/6$\\
$01101$ & 1&2&1&2&1&3&3 & $1/6$\\
\midrule
Counts
& $(1,1,1)$ & $(1,1,1)$ & $(2,0,1)$
& $(1,1,1)$ & $(2,0,1)$ & $(2,0,1)$
& $(1,1,1)$ & $1/2$\\
\bottomrule
\end{tabular}
\end{table}

Again every count is at most two.  For the dual, set
\[
y_{125,1}=y_{135,1}=y_{145,1}=\frac12
\]
and all other variables to zero.  The three rows meet respectively the charged
edge pairs
\[
\{135,145\},\qquad
\{125,145\},\qquad
\{125,135\}.
\]
Each row therefore receives total charge one, and the dual objective is
$3\cdot(1/3)\cdot(1/2)=1/2$.

\section{A worked bounded-occurrence compilation}
\label{app:k3-compilation}

We spell out the compilation of the anchor hypergraph for $G=K_3$.  Write the
vertices of $K_3$ as $1,2,3$.  The source anchor hypergraph has edges
\[
(r,1,2),\qquad (r,1,3),\qquad (r,2,3).
\]
After occurrence splitting, use the nine terminal copies
\[
r_{12},r_{13},r_{23},
\quad 1_{12},1_{13},
\quad 2_{12},2_{23},
\quad 3_{13},3_{23}.
\]
The three source edges become
\[
(r_{12},1_{12},2_{12}),\qquad
(r_{13},1_{13},3_{13}),\qquad
(r_{23},2_{23},3_{23}).
\]
Insert equality gadgets along the five path links
\[
r_{12}{=}r_{13},\quad r_{13}{=}r_{23},\quad
1_{12}{=}1_{13},\quad 2_{12}{=}2_{23},\quad
3_{13}{=}3_{23}.
\]
Each link is replaced by a fresh copy of the six-edge gadget
$\mathcal Q$.  The compiled hypergraph therefore has $24$ vertices and
$3+5\cdot6=33$ hyperedges.  Its degree accounting is exact:

\begin{table}[ht]
\centering
\caption{Degree accounting for the compiled $A(K_3)$.}
\label{tab:k3-degree}
\begin{tabular}{@{}lrrrr@{}}
\toprule
Vertex class & Number & Source edges & Incident gadgets & Degree each\\
\midrule
Middle anchor copy $r_{13}$ & 1 & 1 & 2 & 7\\
Endpoint anchor copies $r_{12},r_{23}$ & 2 & 1 & 1 & 4\\
Graph-vertex occurrence copies & 6 & 1 & 1 & 4\\
Gadget auxiliaries & 15 & 0 & -- & 4\\
\bottomrule
\end{tabular}
\end{table}

The degree sum is
\[
1\cdot7+2\cdot4+6\cdot4+15\cdot4=99=3\cdot33,
\]
which independently checks the edge count.  A promised proper colouring is
obtained by assigning all anchor copies the bit $0$ and all graph-vertex copies
the bit $1$.  In each anchor-copy equality gadget choose one auxiliary equal
to $0$ and the other two equal to $1$; in each graph-copy gadget choose one
auxiliary equal to $1$ and the other two equal to $0$.  Every source edge then
has pattern $(0,1,1)$ and every gadget edge is nonmonochromatic.  This example
realizes the maximum degree seven at the middle anchor copy.

%

\section{Direct dual of the minority packing}
\label{app:nae-dual}

Dualizing \eqref{eq:nae-objective}--\eqref{eq:nae-capacity} gives
\begin{align}
\min\quad
&\frac13\sum_{e\in E}\sum_{j=1}^3y_{e,j}
\label{eq:nae-dual-objective}\\
\text{subject to}\quad
&\sum_{e\in E}y_{e,m_e(S)}\ge1,
&&S\in\mathcal C(H),
\label{eq:nae-dual-constraint}\\
&y_{e,j}\ge0.
\nonumber
\end{align}
It is a fractional cover of all proper-colouring classes by local
minority-position events.  Normalizing a nonzero dual vector by its total mass
produces the maximizer in \eqref{eq:beta-game}; conversely, scaling an optimal
game strategy until its minimum payoff is one gives an optimal dual cover.

\bibliographystyle{unsrtnat}
\bibliography{refs_paper_2}

\begin{thebibliography}{23}
\providecommand{\natexlab}[1]{#1}
\providecommand{\url}[1]{\texttt{#1}}
\expandafter\ifx\csname urlstyle\endcsname\relax
  \providecommand{\doi}[1]{doi: #1}\else
  \providecommand{\doi}{doi: \begingroup \urlstyle{rm}\Url}\fi

\bibitem[Abramsky et~al.(2017)Abramsky, Barbosa, and
  Mansfield]{AbramskyBarbosaMansfield2017}
Samson Abramsky, Rui~Soares Barbosa, and Shane Mansfield.
\newblock Contextual fraction as a measure of contextuality.
\newblock \emph{Physical Review Letters}, 119:\penalty0 050504, 2017.
\newblock \doi{10.1103/PhysRevLett.119.050504}.

\bibitem[Abramsky and Brandenburger(2011)]{AbramskyBrandenburger2011}
Samson Abramsky and Adam Brandenburger.
\newblock The sheaf-theoretic structure of non-locality and contextuality.
\newblock \emph{New Journal of Physics}, 13:\penalty0 113036, 2011.
\newblock \doi{10.1088/1367-2630/13/11/113036}.

\bibitem[Abramsky et~al.(2013)Abramsky, Gottlob, and
  Kolaitis]{AbramskyGottlobKolaitis2013}
Samson Abramsky, Georg Gottlob, and Phokion~G. Kolaitis.
\newblock Robust constraint satisfaction and local hidden variables in quantum
  mechanics.
\newblock In \emph{Proceedings of the Twenty-Third International Joint
  Conference on Artificial Intelligence}, pages 440--446, 2013.

\bibitem[Katende(2026)]{Katende2026Gain}
Ronald Katende.
\newblock Contextual fraction on permutation gain graphs: Exact algorithms,
  query lower bounds, and dynamic maintenance, 2026.
\newblock arXiv:2607.16037 [cs.DS].

\bibitem[Barto et~al.(2021)Barto, Bul{'i}n, Krokhin, and
  Opr{\v{s}}al]{BartoBulinKrokhinOprsal2021}
Libor Barto, Jakub Bul{'i}n, Andrei Krokhin, and Jakub Opr{\v{s}}al.
\newblock Algebraic approach to promise constraint satisfaction.
\newblock \emph{Journal of the ACM}, 68\penalty0 (4):\penalty0 28:1--28:66,
  2021.
\newblock \doi{10.1145/3457606}.

\bibitem[Krokhin and Opr{\v{s}}al(2022)]{KrokhinOprsal2022}
Andrei Krokhin and Jakub Opr{\v{s}}al.
\newblock An invitation to the promise constraint satisfaction problem.
\newblock \emph{ACM SIGLOG News}, 9\penalty0 (3):\penalty0 30--59, 2022.
\newblock \doi{10.1145/3559736.3559740}.

\bibitem[Thapper and {\v{Z}}ivn{'y}(2016)]{ThapperZivny2016}
Johan Thapper and Stanislav {\v{Z}}ivn{'y}.
\newblock The complexity of finite-valued {CSP}s.
\newblock \emph{Journal of the ACM}, 63\penalty0 (4):\penalty0 37:1--37:33,
  2016.
\newblock \doi{10.1145/2974019}.

\bibitem[Barto et~al.(2024)Barto, Butti, Kazda, Viola, and
  {\v{Z}}ivn{'y}]{BartoEtAlValuedMinions2024}
Libor Barto, Silvia Butti, Alexandr Kazda, Caterina Viola, and Stanislav
  {\v{Z}}ivn{'y}.
\newblock Algebraic approach to approximation.
\newblock In \emph{Proceedings of the 39th Annual ACM/IEEE Symposium on Logic
  in Computer Science}, pages 10:1--10:14, 2024.
\newblock \doi{10.1145/3661814.3662076}.

\bibitem[H{\aa}stad(2001)]{Hastad2001}
Johan H{\aa}stad.
\newblock Some optimal inapproximability results.
\newblock \emph{Journal of the ACM}, 48\penalty0 (4):\penalty0 798--859, 2001.
\newblock \doi{10.1145/502090.502098}.

\bibitem[Austrin et~al.(2021)Austrin, Brown-Cohen, and
  H{\aa}stad]{AustrinBrownCohenHastad2021}
Per Austrin, Jonah Brown-Cohen, and Johan H{\aa}stad.
\newblock Optimal inapproximability with universal factor graphs.
\newblock In \emph{Proceedings of the 2021 ACM--SIAM Symposium on Discrete
  Algorithms}, pages 434--453, 2021.
\newblock \doi{10.1137/1.9781611976465.27}.

\bibitem[Cai et~al.(2011)Cai, Lu, and Xia]{CaiLuXia2011}
Jin-Yi Cai, Pinyan Lu, and Mingji Xia.
\newblock Computational complexity of {Holant} problems.
\newblock \emph{SIAM Journal on Computing}, 40\penalty0 (4):\penalty0
  1101--1132, 2011.
\newblock \doi{10.1137/100814585}.

\bibitem[Cai et~al.(2018)Cai, Lu, and Xia]{CaiLuXia2018}
Jin-Yi Cai, Pinyan Lu, and Mingji Xia.
\newblock Dichotomy for real {$\operatorname{Holant}^{c}$} problems.
\newblock In \emph{Proceedings of the 29th Annual ACM--SIAM Symposium on
  Discrete Algorithms}, pages 1802--1821, 2018.
\newblock \doi{10.1137/1.9781611975031.118}.

\bibitem[Brakensiek et~al.(2025{\natexlab{a}})Brakensiek, Guruswami, and
  Sandeep]{BrakensiekGuruswamiSandeep2025}
Joshua Brakensiek, Venkatesan Guruswami, and Sai Sandeep.
\newblock {SDP}s and robust satisfiability of promise {CSP}.
\newblock \emph{Discrete Analysis}, 2025:\penalty0 14, 2025{\natexlab{a}}.
\newblock \doi{10.19086/da.143808}.

\bibitem[Brakensiek et~al.(2025{\natexlab{b}})Brakensiek, Huang, Potechin, and
  Zwick]{BrakensiekHuangPotechinZwick2025}
Joshua Brakensiek, Neng Huang, Aaron Potechin, and Uri Zwick.
\newblock On the mysteries of {MAX NAE-SAT}.
\newblock \emph{SIAM Journal on Discrete Mathematics}, 39\penalty0
  (1):\penalty0 267--313, 2025{\natexlab{b}}.
\newblock \doi{10.1137/23M1591578}.

\bibitem[Brakensiek et~al.(2026)Brakensiek, Ciardo, Guruswami, Potechin, and
  {\v{Z}}ivn{'y}]{BrakensiekCiardoEtAl2026}
Joshua Brakensiek, Lorenzo Ciardo, Venkatesan Guruswami, Aaron Potechin, and
  Stanislav {\v{Z}}ivn{'y}.
\newblock New algorithms and hardness results for robust satisfiability of
  promise {CSP}s.
\newblock In \emph{Proceedings of the 2026 Annual ACM--SIAM Symposium on
  Discrete Algorithms}, pages 3965--3977, 2026.
\newblock \doi{10.1137/1.9781611978971.145}.

\bibitem[Deza and Laurent(1997)]{DezaLaurent1997}
Michel~Marie Deza and Monique Laurent.
\newblock \emph{Geometry of Cuts and Metrics}, volume~15 of \emph{Algorithms
  and Combinatorics}.
\newblock Springer, Berlin, 1997.
\newblock \doi{10.1007/978-3-642-04295-9}.

\bibitem[Scheinerman and Ullman(2011)]{ScheinermanUllman2011}
Edward~R. Scheinerman and Daniel~H. Ullman.
\newblock \emph{Fractional Graph Theory: A Rational Approach to the Theory of
  Graphs}.
\newblock Dover Publications, Mineola, NY, 2011.
\newblock Unabridged republication of the 1997 Wiley edition.

\bibitem[Lund and Yannakakis(1994)]{LundYannakakis1994}
Carsten Lund and Mihalis Yannakakis.
\newblock On the hardness of approximating minimization problems.
\newblock \emph{Journal of the ACM}, 41\penalty0 (5):\penalty0 960--981, 1994.
\newblock \doi{10.1145/185675.306789}.

\bibitem[Schaefer(1978)]{Schaefer1978}
Thomas~J. Schaefer.
\newblock The complexity of satisfiability problems.
\newblock In \emph{Proceedings of the Tenth Annual ACM Symposium on Theory of
  Computing}, pages 216--226, 1978.
\newblock \doi{10.1145/800133.804350}.

\bibitem[Karp(1972)]{Karp1972}
Richard~M. Karp.
\newblock Reducibility among combinatorial problems.
\newblock In Raymond~E. Miller and James~W. Thatcher, editors, \emph{Complexity
  of Computer Computations}, pages 85--103. Plenum Press, New York, 1972.

\bibitem[Garey and Johnson(1979)]{GareyJohnson1979}
Michael~R. Garey and David~S. Johnson.
\newblock \emph{Computers and Intractability: A Guide to the Theory of
  {NP}-Completeness}.
\newblock W. H. Freeman, San Francisco, CA, 1979.

\bibitem[Kolman and Kouteck{'y}(2015)]{KolmanKoutecky2015}
Petr Kolman and Martin Kouteck{'y}.
\newblock Extended formulation for {CSP} that is compact for instances of
  bounded treewidth.
\newblock \emph{The Electronic Journal of Combinatorics}, 22\penalty0
  (4):\penalty0 P4.30, 2015.
\newblock \doi{10.37236/5474}.

\bibitem[Fiorini et~al.(2015)Fiorini, Massar, Pokutta, Tiwary, and {de
  Wolf}]{FioriniEtAl2015}
Samuel Fiorini, Serge Massar, Sebastian Pokutta, Hans~Raj Tiwary, and Ronald
  {de Wolf}.
\newblock Exponential lower bounds for polytopes in combinatorial optimization.
\newblock \emph{Journal of the ACM}, 62\penalty0 (2):\penalty0 17:1--17:23,
  2015.
\newblock \doi{10.1145/2716307}.

\end{thebibliography}
\end{document}